\newif\ifcomm
\def\ps@headings{%
	\def\@oddhead{\mbox{}\scriptsize\rightmark \hfil \thepage}%
	\def\@evenhead{\scriptsize\thepage \hfil\leftmark\mbox{}}%
	\def\@oddfoot{}%
	\def\@evenfoot{}}
\newcolumntype{C}[1]{>{\centering\let\newline\\\arraybackslash\hspace{0pt}}m{#1}}
\newcommand{\mycc}{1.0cm}
\newcommand{\T}[1]{\smallskip\noindent\textbf{#1}}
\newcommand{\TT}[1]{\noindent\textbf{#1}}
\newcommand{\I}[1]{\smallskip\noindent\textit{#1}}
\newcommand{\set}[1]{\left\{#1\right\}}
\newcommand{\bp}[1]{\Big(#1\Big)}
\newcommand{\given}[1]{\,\Big\vert\,#1}
\providecommand{\ie}{\emph{i.e.,} }
\providecommand{\eg}{\emph{e.g.,} }
\providecommand{\vs}{vs.\ }
\providecommand{\name}{$LSQ$\xspace}
\DeclareMathOperator{\E}{\mathbb{E}}
\newtheorem{theorem}{Theorem}
\newtheorem{lemma}{Lemma}
\newtheorem{definition}{Definition}
\newtheorem{example}{Example}
\newtheorem{proposition}{Proposition}
\newtheorem{assumption}{Assumption}
	\newcommand{\mycomm}[3]{{\footnotesize{{\color{#2} \textbf{[#1: #3]}}}}}
    \newcommand{\Fmycomm}[3]{\footnote{{{\color{#2} \textbf{[#1: #3]}}}}}
    \newcommand{\mycomm}[3]{}
    \newcommand{\Fmycomm}[3]{}
\newcommand{\brac}[1]{\left\{ #1 \right\}}
\newcommand{\sbrac}[1]{\left[ #1 \right]}
\newcommand{\new}[1]{\textcolor{black}{#1}}
\newcommand{\newer}[1]{\textcolor{black}{#1}}
\newcommand{\newest}[1]{\textcolor{black}{#1}}
\begin{document}

\title{LSQ: Load Balancing in Large-Scale Heterogeneous Systems with Multiple Dispatchers}

\author{Shay Vargaftik,
        Isaac Keslassy, 
        and Ariel Orda 

\thanks{This manuscript is an extended version of a paper accepted for publication in the IEEE/ACM TRANSACTIONS ON NETWORKING (submitted on 15 Apr 2019, revised on 12 Aug 2019, 19 Nov 2019, 09 Jan 2020 and accepted on 26 Feb 2020). It is based on part of the Ph.D. thesis of Shay Vargaftik \cite{shayvthesys}, submitted to the Senate of the Technion on Apr 2019. Preprint of this work has been available online on \cite{Openreviewlsq} since 27 Dec 2018.}

\thanks{Shay Vargaftik is with VMware Research, E-mail: shayv@vmware.com. Isaac Keslassy and Ariel Orda are with the Technion, E-mail: \{isaac@ee, ariel@ee\}.technion.ac.il.}

\thanks{© 2020 IEEE.  Personal use of this material is permitted.  Permission from IEEE must be obtained for all other uses, in any current or future media, including reprinting/republishing this material for advertising or promotional purposes, creating new collective works, for resale or redistribution to servers or lists, or reuse of any copyrighted component of this work in other works.}

}

\maketitle

\begin{abstract}

Nowadays, the efficiency and even the feasibility of traditional load-balancing policies are challenged by the rapid growth of cloud infrastructure and the increasing levels of server heterogeneity. In such  heterogeneous systems with many load-balancers, traditional solutions, such as $JSQ$, incur a prohibitively large communication overhead and detrimental incast effects due to herd behavior. Alternative low-communication policies, such as $JSQ(d)$ and the recently proposed $JIQ$, are either unstable or provide poor performance.

We introduce the \textit{Local Shortest Queue (\name)} family of load balancing algorithms. In these algorithms, each dispatcher maintains its own, local, and possibly outdated view of the server queue lengths, and keeps using $JSQ$ on its local view. A small communication overhead is used infrequently to update this local view. We formally prove that as long as the error in these local estimates of the server queue lengths is bounded  \emph{in expectation}, the entire system is strongly stable. Finally, in simulations, we show how simple and stable \name{} policies exhibit appealing performance and significantly outperform existing low-communication policies, while using an equivalent communication budget. In particular, our simple policies often outperform even $JSQ$ due to their reduction of herd behavior. We further show how, by relying on smart servers (\ie advanced pull-based communication), we can further improve performance and lower communication overhead. 

\end{abstract}

\begin{IEEEkeywords}
Local Shortest Queue, Load Balancing, Heterogeneous Systems, Multiple Dispatchers. 
\end{IEEEkeywords}

\section{Introduction}

\TT{Background.} In recent years, due to the rapidly increasing size of cloud services and applications~\cite{foster2008cloud,nishtala2013scaling,dean2008mapreduce,gupta2007analysis}, the design of load balancing algorithms for parallel server systems has become extremely challenging. The goal of these algorithms is to efficiently load-balance incoming jobs to a large number of servers, even though these servers display large \textit{heterogeneity} for two reasons.
First, current large-scale systems increasingly contain, in addition to multiple generations of CPUs (central processing units)~\cite{govindan2016evolve}, various types of accelerated devices such as GPUs (graphics processing units), FPGAs (field-programmable gate arrays) and ASICs (application-specific integrated circuit), with significantly higher processing speeds. Second, VMs (virtual machines) or containers are commonly used to deploy different services that \new{may} share resources on the same servers, potentially leading to significant and unpredictable heterogeneity~\new{\cite{kannan2018proctor,garg2018migrating}}.

In a traditional server farm, a centralized load-balancer (dispatcher) can rely on a full-state-information policy with strong theoretical guarantees for heterogeneous servers, such as join-the-shortest-queue ($JSQ$), which routes emerging jobs to the server with the shortest queue~\cite{winston1977optimality,weber1978optimal,foschini1978basic,gupta2007analysis,foley2001join}. This is because in such single-centralized-dispatcher scenarios, the dispatcher forms a single access point to the servers. Therefore, by merely receiving a notification from each server upon the completion of each job, it can track all queue lengths, because it knows the exact arrival and departure patterns of each queue (neglecting propagation times)~\cite{lu2011join}. 
The communication overhead between the servers and the dispatcher is at most a single message per job, which is appealing and does not increase with the number of servers. 
\new{Note that unless Direct Server Return (DSR) is employed, there is not even a need for this additional message per job, since all job responses return through the single dispatcher anyway.}

However, in current clouds, which keep growing in size and thus have to rely on multiple dispatchers \cite{gandhi2015duet}, implementing a policy like $JSQ$ may incur two main problems. (1)~It involves a prohibitive implementation and communication overhead as the number $m$ of dispatchers increases \cite{lu2011join}; this is because each server needs to keep all $m$ dispatchers updated as jobs arrive and complete, leading to $O(m)$ communication messages per job \new{(and this still holds even when DSR is not employed, since any reply only transits through a single dispatcher)}. (2)~Also, it may suffer from incast issues when all/many dispatchers send at once all incoming traffic to the currently-shortest queue.  These two problems force cloud dispatchers to rely on policies that do not provide any service guarantees with multiple dispatchers and heterogeneous servers \cite{citrix,youtube_lec}. For instance, two widely-used open-source load balancers, namely HAProxy and NGINX, have recently introduced the ``power of two choices'' ($JSQ(2)$) policy into their L7 load-balancing algorithms \cite{ngynx_po2,haproxy_po2}.\footnote{Quoting \cite{ngynx_po2} (by Owen Garrett, Head of Products at NGINX): ``\emph{Classic load-balancing methods such as Least Connections [$JSQ$] work very well when you operate a single active load balancer which maintains a complete view of the state of the load-balanced nodes. The ``power of two choices'' approach is not as effective on a single load balancer, but it deftly avoids the bad-case ``herd behavior'' that can occur when you scale out to a number of independent load balancers. This scenario is not just observed when you scale out in high-performance environments; it's also observed in containerized environments where multiple proxies each load balance traffic to the same set of service instances}.''}

\T{Related work.} Despite their increasing importance, scalable policies for heterogeneous systems with multiple dispatchers have received little attention in the literature. In fact, as we later discuss, the only suggested scalable policies that address the many-dispatcher scenario in a heterogeneous setting are based on join-the-idle-queue ($JIQ$) schemes, and none of them is stable \cite{zhou2017designing}. 

In the $JSQ(d)$ (power-of-choice) policy, to make a routing decision, a dispatcher samples $d \geq 2$ queues uniformly at random and chooses the shortest among them~\cite{maguluri2014heavy,gamarnik2018delay,mukhopadhyay2016randomized,ying2017power,bramson2012asymptotic,bramson2010randomized}.  $JSQ(d)$ is stable in systems with homogeneous servers. However, with heterogeneous servers, $JSQ(d)$ leads to poor performance and even to instability, both with a single as well as with multiple dispatchers~\cite{foss1998stability}. 

In the $JSQ(d,m)$ (power-of-memory) policy, the dispatcher samples the $m$ shortest queues from the previous decision in addition to  $d \geq m \geq 1$ new queues chosen uniformly-at-random~\cite{shah2002use,mitzenmacher2002load}. The job is then routed to the shortest among these $d+m$ queues. $JSQ(d,m)$ has been shown to be stable in the case of a single dispatcher when $d=m=1$, even with heterogeneous servers. However, it offers poor performance in terms of job completion time, and it has not been studied in the multiple-dispatcher realm, thus has no theoretical guarantees.

A recent study \cite{zhou2017designing} proposes a class of policies that are both throughput optimal and heavy-traffic delay optimal. However, their assumptions are not aligned with our system model and motivation due to several reasons: (1) For heterogeneous servers, \cite{zhou2017designing} requires the knowledge of the server service rates, which may not be achievable in practice. (2) \cite{zhou2017designing} assumes that the number of jobs that a server may complete in a time slot, as well as the number of jobs that may arrive at a dispatcher in a time slot, are deterministically upper-bounded, which rules out important modeling options with unbounded support, such as geometric services or Poisson arrivals. (3) Most importantly, they consider only a single dispatcher, and it is unclear whether their analysis and performance guarantees can be extended to multiple dispatchers.

In addition, to address the communication overhead in systems with multiple dispatchers, the $JIQ$ policy has been proposed \cite{lu2011join,mitzenmacher2016analyzing,stolyar2017pull,stolyar2015pull,van2017load}. Roughly speaking, in $JIQ$, each dispatcher routes jobs to an idle server, if it is aware of any, and to a random server otherwise. Servers may only notify dispatchers when they become idle. $JIQ$ achieves low communication overhead of at most a single message per job, irrespective of the number of dispatchers, and good performance at low and moderate loads when servers are homogeneous \cite{lu2011join}. However, for heterogeneous servers, $JIQ$ is not stable, \ie it fails to achieve 100\% throughput \cite{zhou2017designing}.

Finally, two recent studies on low-communication load balancing \cite{jonatha2018power,van2019hyper} propose to use local memory as well to hold the possibly-outdated server states. As we later show, these policies are, in fact, special cases of \name. That is, they only consider a single dispatcher and homogeneous servers while we consider multiple dispatchers and heterogeneous servers, which hold the main motivation and contribution of our work. \new{Moreover, there are additional significant differences between our model assumptions and theirs that affect the analysis. For example, they consider a continuous-time model with Poisson arrivals and exponential service rates in which incast is impossible, whereas our model is in discrete-time and we only assume the existence of a first and a second moment of the processes. In particular, we do not assume any specific distribution of the arrivals or service rates by the servers. Also, they analyze their algorithms in a large-system limit, whereas our analysis deals with a finite number of servers and dispatchers.}

\T{Contributions.} This paper makes the following contributions:

\I{Local Shortest Queue (\name{}).} We introduce \name{}, a new family of load balancing algorithms for large-scale heterogeneous systems with multiple dispatchers. As Figure \ref{fig:model} illustrates, in \name{}, each dispatcher keeps a local view of the server queue lengths and routes jobs to the shortest among them. Communication overhead among the servers and the dispatchers is used only to update the local views and make sure they are not too far from the real server queue lengths. 

\begin{figure*}[t!]
\centering
\begin{subfigure}{0.35\linewidth}
\includegraphics[width=\textwidth]{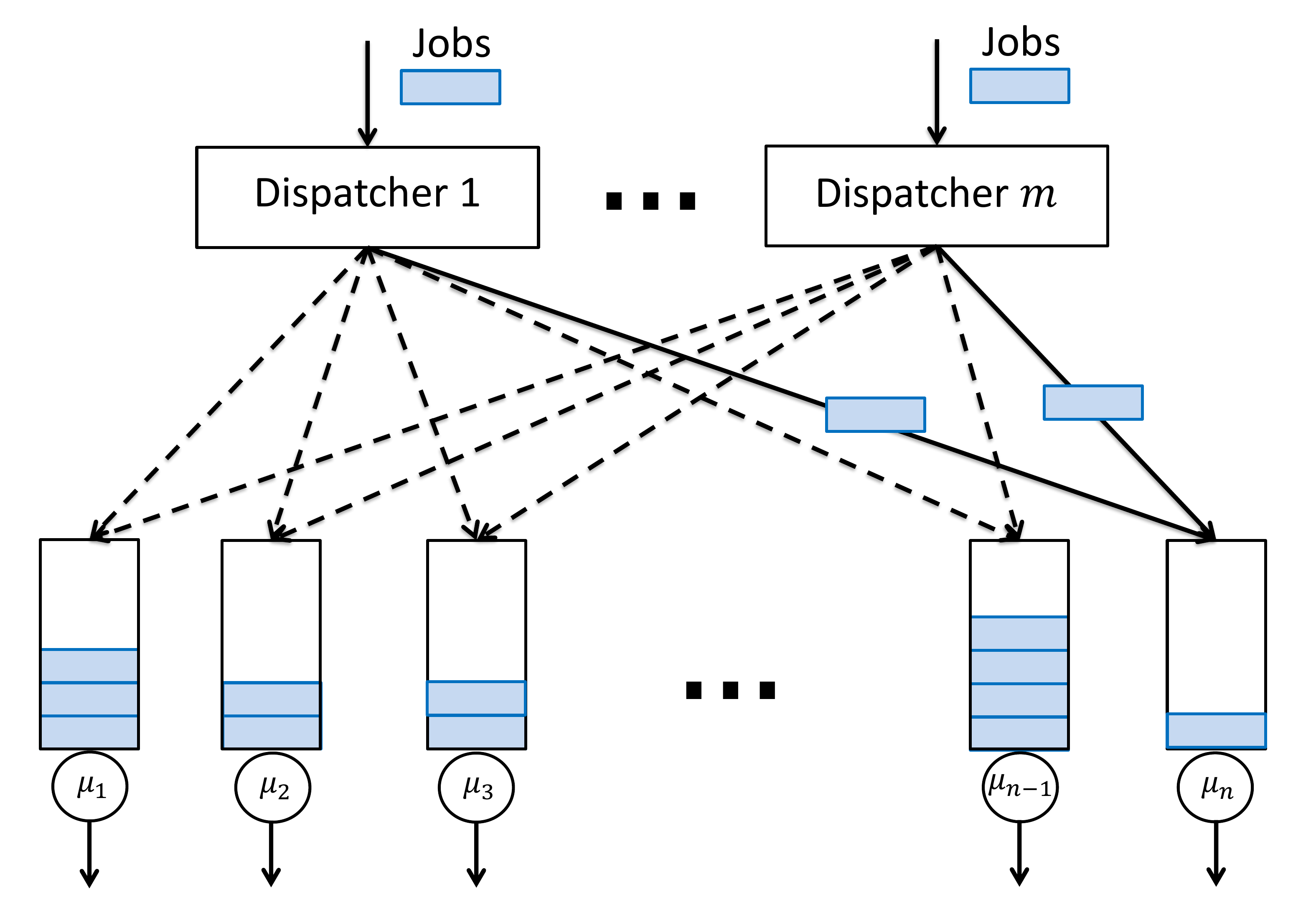}
  \caption{The $JSQ$ approach.}
  \label{fig:model:global}
\end{subfigure}
\quad\quad\quad\quad\quad
\begin{subfigure}{0.35\linewidth}
  \includegraphics[width=\textwidth]{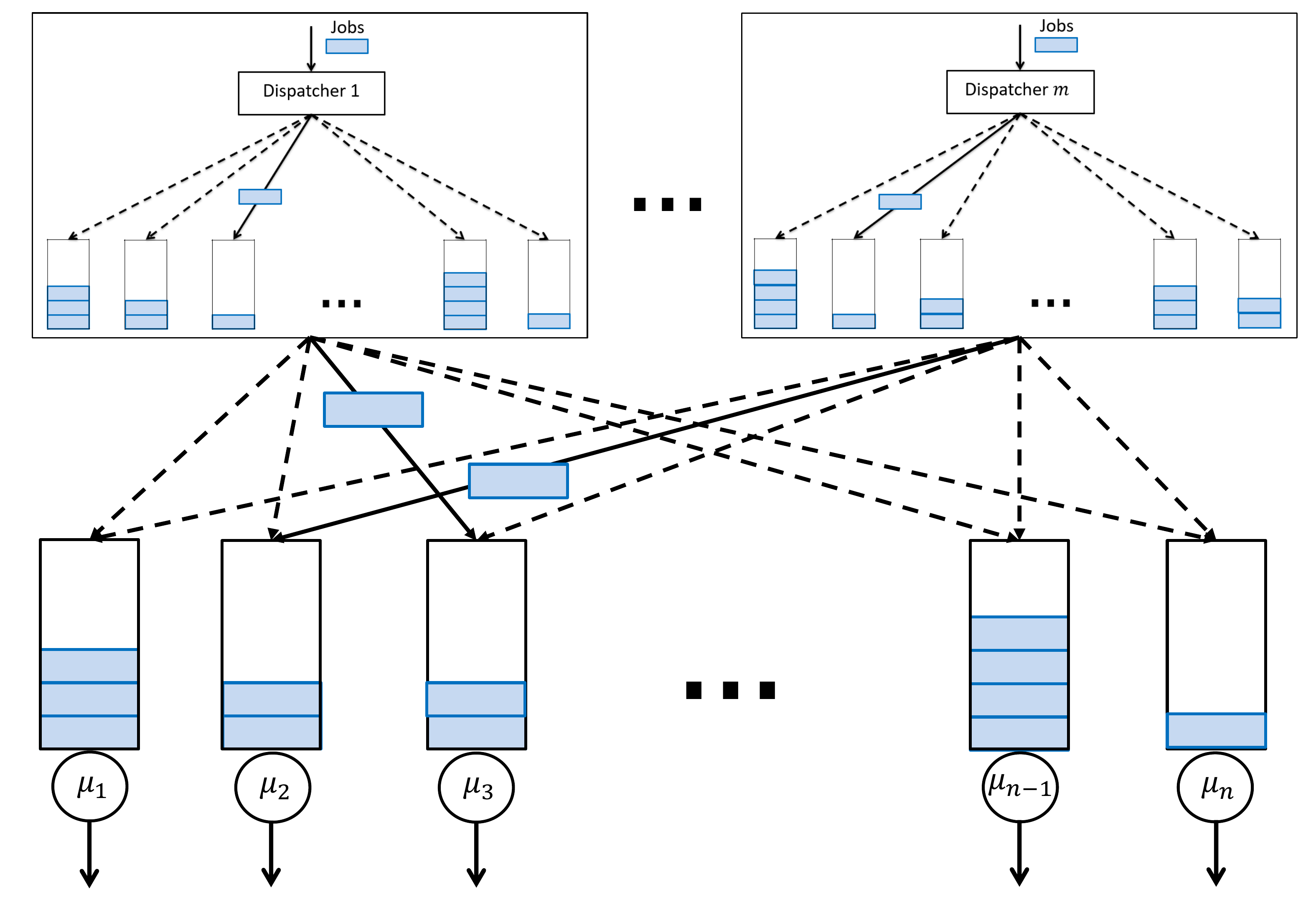}
  \caption{The \name{} approach.}
  \label{fig:model:local}
\end{subfigure}
\caption{$JSQ$ \vs \name. \textbf{(a)} $JSQ$ requires instant knowledge of all server queues by all dispatchers, resulting in substantial communication overhead and possible bad-case ``herd-behaviour'' leading to incast issues. \textbf{(b)} At each dispatcher, \name{} relies on limited current and past information from the servers to construct a local view of all the server queue lengths. For instance, dispatcher 1 believes that the queue length at server 3 is 1, while it is 2. It then sends jobs to the shortest queue as dictated by its view (here, to server 3 rather than server $n$). Communication is used only to update the local views of the dispatchers, \ie to improve their local views.}
\label{fig:model}
\end{figure*}

\I{Sufficient stability condition and stability proof.} We prove that all \name{} policies that keep a \emph{bounded distance in expectation} between the real queue lengths and the local views are strongly stable, \ie keep bounded expected queue lengths. The main difficulty in the proof arises from the fact that the decisions taken by an \name{} policy depend on the \emph{local view of each dispatcher}, hence on a potentially long history of system states. To address this challenge, we introduce two additional {stable} policies into our analysis: (1)~$JSQ$ and (2)~Weighted-Random ($WR$). Roughly speaking, we show that our policy is sufficiently similar to $JSQ$ which, in turn, is better than $WR$. We complete the proof by using the fact that, unlike $JSQ$,  $WR$ takes routing decisions that do not depend on the system state. 

\I{Simplified stability conditions.} It can be challenging to prove that an \name{} policy is stable, \ie that in expectation, the local dispatcher views are not too far from the real queue lengths. Therefore, we develop simpler sufficiency conditions to prove that an \name{} policy is stable and exemplify their use.  

\I{Stable \name{} policies.} Since \name{} is not restricted to work with either push-- (\ie dispatchers sample the servers) or pull-- (\ie servers update the dispatchers) based communication, we aim to achieve the same communication overhead as the lowest-overhead/best-known examples in each class. Accordingly, we show how to construct new stable \name{} policies with communication patterns similar to those of other low-communication policies such as the push-based $JSQ(2)$, but with significantly stronger theoretical guarantees. 

\I{Simulations.} Using simulations we show how simple and stable \name{} policies present appealing performance, and significantly outperform other low-communication policies using an equivalent communication budget. Our simple policies often outperform even $JSQ$. This is achieved by sending jobs to less loaded servers but also by reducing herd behavior when compared to $JSQ$, as different dispatchers have different views. 

\I{Smart servers.} We show how relying on smart servers (\ie advanced pull-based communication) allows us to improve performance and communication overhead even further and to \emph{consistently} outperform $JSQ$ in terms of both mean queues lengths and job completion time delay tail distribution. We rely on two main elements to achieve this: (1)~fine-tuning the probabilities at which servers send messages, such that less loaded servers send messages with a higher probability; (2)~when a message is sent by a server, it is sent to the dispatcher with the worst local view of this server. 

\I{Simulation code.} To benefit the research community, we made our evaluation code available online\cite{lsqcode}.

\section{System model}

We consider a system with a set $M = \set{1,2,\ldots,m}$ of dispatchers load-balancing incoming jobs among a set $N = \set{1,2,\ldots,n}$ of possibly-heterogeneous servers. 

\T{Time slots.} We assume a time slotted system with the following order of events within each time slot: (1)~jobs arrive at each dispatcher; (2)~a routing decision is taken by each dispatcher and it immediately forwards its jobs to one of the servers; (3)~each server performs its service for this time-slot. 

Note that, for simplicity and ease of exposition, we assume that all dispatcher time slots are synchronized; \newer{in practice, such synchronization can be achieved by well established techniques such as the Precision Time Protocol (PTP) \cite{ptp} that is commonly used in order to establish sub-microsecond network-wide time synchronization. For instance, \cite{perry2014fastpass} relies on NIC-based synchronization, which achieves a time accuracy of a few hundred nanoseconds even across a multi-hop network.}

Nonetheless, we believe that our results may be extended to a framework where time slots are not synchronized among the dispatchers, and leave  such a generalization to future work.

\T{Dispatchers.} As mentioned, each of the $m$ dispatchers does not store incoming jobs, and instead immediately forwards them to one of the $n$ servers. We denote by $a^j(t)$ the number of exogenous job arrivals at dispatcher $j$ at the beginning of time slot $t$. We make the following assumption:
\begin{equation}\label{eq:arrival_1}
    \brac{a(t) = \sum_{j=1}^m a^j(t)}_{t=0}^{\infty} \text{ is an $i.i.d.$ process}
\end{equation}
\begin{equation}\label{eq:arrival_2}
    \E[a(0)] = \lambda^{(1)}
\end{equation}
\begin{equation}\label{eq:arrival_3}
    \E \Big[\bp{a(0)}^2\Big] = \lambda^{(2)}
\end{equation} 
That is, we only assume that the total job arrival process to the system is $i.i.d.$ over time slots and admits finite first and second moments. \new{Note that we do not assume any specific process or any deterministic bound on the number of arrived jobs at a given time slot}. The division of arriving jobs among the dispatchers is assumed to follow any arbitrary policy that does not depend on the system state (\ie queue lengths).  Furthermore, we just assume that there is a positive probability of job arrivals at all dispatchers. That is, we assume that there exists a strictly positive constant $\epsilon_0$ such that 
\begin{equation}\label{eq:arrival_4}
    \mathbbm{P}(a^j(t)>0)>\epsilon_0 \quad \forall (j,t) \in M \times \mathbb{N}.
\end{equation}
This, for example, covers complex scenarios with time-varying arrival rates to the different dispatchers that are not necessarily independent. We are not aware of previous work covering such general scenarios \new{with possibly correlated arrivals at the different dispatchers}. 

We further denote by $a_i^j(t)$ as the number of jobs forwarded by dispatcher $j$ to server $i$ at the beginning of time slot $t$. Let $$a_i(t) = \sum_{j=1}^m a_i^j(t)$$ be the total number of jobs forwarded to server $i$ by all dispatchers at time slot $t$. \newer{Finally, we assume that the arrival and service rates are unknown to the dispatchers.}

\T{Servers.} Each server has a FIFO queue for storing incoming jobs. Let $Q_i(t)$ be the queue length of server $i$ at the beginning of time slot $t$ (before any job arrivals and departures at time slot $t$). We denote by $s_i(t)$ the potential service offered to queue $i$ at time slot $t$. That is, $s_i(t)$ is the maximum number of jobs that can be completed by server $i$ at time slot $t$. We assume that, for all $i \in N$,  
\begin{equation}\label{eq:service_1}
    \{s_i(t)\}_{t=0}^{\infty} \text{ is $i.i.d.$ over time slots}
\end{equation}
\begin{equation}\label{eq:service_2}
    \E[s_i(0)] = \mu_i^{(1)}
\end{equation}
\begin{equation}\label{eq:service_3}
    \E \Big[\bp{s_i(0)}^2\Big] = \mu_i^{(2)}
\end{equation}
Namely, we assume that the service process of each server is $i.i.d.$ over time slots and admits finite first and second moments. \new{Note that we do not assume any specific process or any deterministic bound on the number of completed jobs at a given time slot}. We also assume that all service processes are mutually independent across the different servers and, furthermore, they are independent of the arrival
processes. 

\T{Admissibility.} We assume the system is sub-critical, i.e., that there exists an $\epsilon > 0$ such that
\begin{equation}\label{eq:subcritical_assumption}
    \sum_{i=1}^n \mu_i^{(1)} - \lambda^{(1)} = \epsilon. 
\end{equation}

\section{\name load balancing}
Next, we formally introduce the \name{} family of load balancing policies. Then, we introduce our main theoretical result of the paper: namely, we establish a sufficient, easy to satisfy, condition for an \name{} policy to be stable. 

\subsection{The \name{} Family}
We assume that each dispatcher $j \in M$ holds a local view of each server's $i \in N$ queue length. We denote by $\tilde{Q}_i^j(t)$ the queue length of server $i$ as dictated by the local view of dispatcher $j$ at the beginning of time slot $t$ (before any arrivals and departures at that time slot). Finally, we can define \name.
\begin{definition}[Local Shortest Queue (\name)]
We term a load balancing policy as an \name{} policy iff at each time slot, each dispatcher $j$ follows the $JSQ$ policy based on its local view of the queue lengths, \ie $\{\tilde{Q}_i^j(t)\}_{i=1}^n$. That is, dispatcher $j$ forwards all of its incoming jobs at the beginning of time slot $t$ to a server $i^*$ such that $i^* \in argmin_i \{\tilde{Q}_i^j(t)\}_{i=1}^n$ (ties are broken randomly).
\end{definition}

As we later show, this broad definition provides appealing flexibility when designing a load balancing policy, i.e., there are numerous approaches for how to \emph{update the local views} of the dispatchers.  

\subsection{Sufficient stability condition}

We proceed to introduce a sufficient condition for an \name{} policy to be stable. This condition essentially states that the difference between the local views of the dispatchers and the real queue lengths of the servers should be \emph{bounded in expectation}. \new{Note that the actual difference between the local views and the real queue states may be unbounded. As we later discuss, this loose assumption allows flexibility in the algorithm design with strong theoretical guarantees and appealing performance}. Formally:

\begin{assumption}\label{asm_2}
 There exists a constant $C>0$ such that at the beginning of each time slot (before any arrivals and departures), it holds that
\begin{equation}\label{eq:disp_knowledge}
    \E \Big[ \big|Q_i(t) - \tilde{Q}_i^j(t)\big| \Big] \le C \quad \forall \, (i,j,t) \in N \times M \times \mathbb{N}. 
\end{equation}
\end{assumption}
We will later rely on it to prove the  main theoretical result of this paper, \ie that  any \name{} load balancing policy that satisfies this condition is strongly stable. 

\subsection{Stability of \name}

We begin by formally stating our considered concept of stability. 
\begin{definition}[Strong stability] We say that the system is strongly stable {\em iff} there exists a constant $K \ge 0$ such that 
$$\limsup_{T \to \infty} \frac{1}{T} \sum_{t=0}^{T-1} \sum_{i=1}^n \E \Big[Q_i(t)\Big] \le K.$$
That is, the system is strongly stable when the expected time averaged sum of queue lengths admits a constant upper bound.   
\end{definition}
Strong stability is a strong form of stability that implies finite
average backlog and (by Little's theorem) finite average delay. Furthermore, under mild conditions, it implies other commonly considered forms of stability, such as steady state
stability, rate stability, mean rate stability and more (see \cite{neely2012stability}). Note that strong stability has been widely used in queuing systems (see \cite{georgiadis2006resource} and references therein) whose state does not necessarily admit an irreducible and aperiodic Markov chain representation (therefore positive-recurrence may not be considered).

\begin{theorem}\label{thm:1}
    Assume that the system uses \name{} and Assumption \ref{asm_2} holds.
    Then the system is strongly stable.
\end{theorem}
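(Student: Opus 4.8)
The plan is to run a quadratic Lyapunov (drift) argument with $L(t)=\sum_{i=1}^n Q_i(t)^2$, and to control the only nontrivial term — the contribution of the new arrivals — by a two–step comparison: the \name{} routing is ``almost as good as'' $JSQ$, and $JSQ$ is in turn ``no worse than'' a state--independent Weighted--Random ($WR$) routing (jobs sent to server $i$ with probability $\mu_i^{(1)}/\sum_k\mu_k^{(1)}$) which can be analyzed directly. From the one--slot recursion $Q_i(t+1)=\max\set{Q_i(t)+a_i(t)-s_i(t),0}$ and $(\max\set{x,0})^2\le x^2$ one gets
$$L(t+1)-L(t)\le \sum_{i=1}^n \para{a_i(t)-s_i(t)}^2 + 2\sum_{i=1}^n Q_i(t)\para{a_i(t)-s_i(t)}.$$
Taking expectations, the first sum is a finite constant $B$: since $a_i(t)\ge 0$, $\sum_i a_i(t)^2\le a(t)^2$, so $\E\big[\sum_i(a_i(t)-s_i(t))^2\big]\le 2\lambda^{(2)}+2\sum_i\mu_i^{(2)}=:B<\infty$ by \eqref{eq:arrival_3} and \eqref{eq:service_3}. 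For the service part, $s_i(t)$ is independent of the history $\mathcal{H}_t$ that determines $Q_i(t)$ (service processes are i.i.d.\ and independent of the arrivals), so $\E\big[\sum_i Q_i(t)s_i(t)\big]=\sum_i\mu_i^{(1)}\E[Q_i(t)]$.

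The crux is the arrival term $\E\big[\sum_i Q_i(t)a_i(t)\big]$. Under \name{}, dispatcher $j$ sends its whole batch $a^j(t)$ to $i^\ast_j:=\arg\min_i\tilde Q_i^j(t)$, so $\sum_i Q_i(t)a_i(t)=\sum_{j=1}^m Q_{i^\ast_j}(t)\,a^j(t)$. Because the split of $a(t)$ among the dispatchers does not depend on the system state and $a(t)$ is i.i.d.\ over slots, $a^j(t)$ is independent of $\mathcal{H}_t$ (and hence of $Q(t)$ and of all local views), so conditioning on $\mathcal{H}_t$ gives $\E[Q_{i^\ast_j}(t)a^j(t)]=\E[a^j(t)]\,\E[Q_{i^\ast_j}(t)]$ with $\sum_j\E[a^j(t)]=\lambda^{(1)}$. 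I would then compare $Q_{i^\ast_j}(t)$ with the true minimum queue by telescoping through the local view: with $i_0:=\arg\min_i Q_i(t)$,
$$Q_{i^\ast_j}(t)-Q_{i_0}(t)=\big(Q_{i^\ast_j}(t)-\tilde Q_{i^\ast_j}^j(t)\big)+\big(\tilde Q_{i^\ast_j}^j(t)-\tilde Q_{i_0}^j(t)\big)+\big(\tilde Q_{i_0}^j(t)-Q_{i_0}(t)\big),$$
where the middle bracket is $\le 0$ by the definition of $i^\ast_j$ and the outer brackets are controlled in absolute value. Since $i^\ast_j$ and $i_0$ are random indices, I would bound $\E\abs{Q_{i^\ast_j}(t)-\tilde Q_{i^\ast_j}^j(t)}\le\sum_i\E\abs{Q_i(t)-\tilde Q_i^j(t)}\le nC$ (and likewise for the last bracket) using Assumption \ref{asm_2}, giving $\E[Q_{i^\ast_j}(t)]\le\E[\min_i Q_i(t)]+2nC$ and hence $\E\big[\sum_i Q_i(t)a_i(t)\big]\le\lambda^{(1)}\E[\min_i Q_i(t)]+2nC\lambda^{(1)}$. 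Finally, $\min_i Q_i(t)\le\big(\sum_i\mu_i^{(1)}Q_i(t)\big)/\big(\sum_k\mu_k^{(1)}\big)$ because the right side is a convex combination of the $Q_i(t)$'s — this is precisely the ``$JSQ$ is no worse than $WR$'' step.

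Assembling the three pieces and using sub--criticality $\sum_k\mu_k^{(1)}=\lambda^{(1)}+\eps$ yields a negative--drift inequality
$$\E[L(t+1)]-\E[L(t)]\le B+4nC\lambda^{(1)}-\frac{2\eps}{\lambda^{(1)}+\eps}\,\E\Big[\sum_{i=1}^n\mu_i^{(1)}Q_i(t)\Big].$$
Summing over $t=0,\dots,T-1$, dividing by $T$, dropping $\E[L(T)]\ge 0$ and using a finite $\E[L(0)]$ (empty initial queues, say), then letting $T\to\infty$, bounds $\limsup_T\frac1T\sum_t\E\big[\sum_i\mu_i^{(1)}Q_i(t)\big]$ by a constant; dividing by $\mu_{\min}=\min_i\mu_i^{(1)}>0$ gives the claimed bound on $\limsup_T\frac1T\sum_t\E\big[\sum_i Q_i(t)\big]$, i.e.\ strong stability. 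The main obstacle is exactly the one flagged above: the \name{} decision at slot $t$ is a complicated functional of the whole past through the local view, so a naive drift computation does not close. The telescoping--through--the--local--view decomposition, combined with the fact that Assumption \ref{asm_2} controls each coordinate error \emph{in expectation} (so that a crude sum over the $n$ servers absorbs the random choice of $i^\ast_j$), is what converts ``\name{} follows its stale view'' into the usable inequality ``\name{} $\approx JSQ$'', after which the classical $JSQ$--versus--$WR$ convexity argument finishes the proof.
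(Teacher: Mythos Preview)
Your plan is essentially the paper's: quadratic Lyapunov drift, an ``\name{}~$\approx$~$JSQ$'' step controlled by Assumption~\ref{asm_2}, and a ``$JSQ\le WR$'' convex-combination step; your telescoping through the local view is exactly the content of Lemma~\ref{lem:diff_w_JSQ} (proved via Lemma~\ref{lem:our_JSQ}), and your inequality $\min_i Q_i(t)\le \sum_i\mu_i^{(1)}Q_i(t)/\sum_k\mu_k^{(1)}$ is the pointwise form of Lemma~\ref{lem:JSQ_wr}. The telescoping summation and the final division by $\mu_{\min}$ are identical to the paper's.

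There is one technical slip. You factor $\E[Q_{i^\ast_j}(t)a^j(t)]=\E[a^j(t)]\,\E[Q_{i^\ast_j}(t)]$ by asserting $a^j(t)\perp\mathcal H_t$, but the model only guarantees that the \emph{total} $a(t)=\sum_j a^j(t)$ is i.i.d.\ over slots and that the per-dispatcher split does not depend on \emph{queue lengths}; the split is allowed to depend on past arrivals, so $a^j(t)$ may correlate with the local views $\tilde Q^j(t)$ and hence with $i^\ast_j$. The paper is careful to avoid exactly this: it keeps the comparison pathwise, bounding $\sum_i Q_i(t)\bigl(a_i(t)-a_i^{JSQ}(t)\bigr)\le a(t)\sum_{i,j}|Q_i(t)-\tilde Q_i^j(t)|$ (Lemma~\ref{lem:diff_w_JSQ}), and only then invokes independence of $a(t)$ alone. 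In your argument the fix is equally easy: do the telescoping pathwise to get $Q_{i^\ast_j}(t)\le \min_i Q_i(t)+2\sum_i|Q_i(t)-\tilde Q_i^j(t)|$, multiply by $a^j(t)$, sum over $j$, and bound $a^j(t)\le a(t)$ in the error term before taking expectations; this costs an extra factor $m$ in the constant ($2mnC\lambda^{(1)}$ instead of your $2nC\lambda^{(1)}$), matching the paper's~\eqref{eq:exp_assump}.
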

 
\begin{proof}
A server can work on a job immediately upon its arrival. Therefore, the queue dynamics at server $i$ are given by
\begin{equation}\label{eq:workload_dynamics}
  Q_i(t+1) = [Q_i(t) + a_i(t) - s_i(t)]^+,  
\end{equation}
 where $[\cdot]^+ \equiv \max\set{\cdot,0}$. Squaring both sides of \eqref{eq:workload_dynamics} yields 
\begin{equation}\label{eq:square_w}
\begin{split}
    &\bp{Q_i(t+1)}^2 \le \bp{Q_i(t)}^2  + \bp{a_i(t)}^2 + \bp{s_i(t)}^2 \cr 
    & + 2a_i(t)Q_i(t) - 2s_i(t)Q_i(t) - 2s_i(t)a_i(t).   
\end{split}
\end{equation} 
Rearranging \eqref{eq:square_w} and omitting the last term yields 
\begin{equation}
\begin{split}
    &\bp{Q_i(t+1)}^2 - \bp{Q_i(t)}^2 \le \cr 
    &\bp{a_i(t)}^2 + \bp{s_i(t)}^2 - 2Q_i(t)\bp{s_i(t)-a_i(t)}.    
\end{split}
\end{equation}
Summing over the servers yields
\begin{equation}\label{eq:sum_over_servers}
\begin{split}
  &\sum_{i=1}^n \bp{Q_i(t+1)}^2 - \sum_{i=1}^n \bp{Q_i(t)}^2 \le \cr  
  &B(t) - 2\sum_{i=1}^n Q_i(t)\bp{s_i(t)-a_i(t)},
\end{split}
\end{equation}
where
\begin{equation}
  B(t) = \sum_{i=1}^n \bp{a_i(t)}^2 + \sum_{i=1}^n \bp{s_i(t)}^2.
\end{equation}
We would like to proceed by taking the expectation of \eqref{eq:sum_over_servers}. To that end, we need to analyze the term $$\sum_{i=1}^n Q_i(t)\bp{s_i(t)-a_i(t)},$$ since $\set{Q_i(t)}_{i=1}^n$ and $\set{a_i(t)}_{i=1}^n$ are dependent.

We shall conduct the following plan. We will introduce two additional policies into our analysis: (1)~$JSQ$ and (2)~Weighted-Random ($WR$). Roughly speaking, we will show that the routing decision that is taken by our policy at each dispatcher and each time slot $t$ is sufficiently similar to the decision that would have been made by $JSQ$ \emph{given the same system state}, which, in turn, is no worse than the decision that $WR$ would make at that time slot. Since in $WR$ the routing decisions taken at time slot $t$ do not depend on the system state at time slot $t$, we will obtain the desired independence, which allows us to continue with the analysis. 

We start by introducing the corresponding $JSQ$ and $WR$ notations. Let $$a_i^{JSQ}(t) = \sum_{j=1}^m a_i^{j,JSQ}(t)$$ be the number of jobs that will be routed to server $i$ at time slot $t$ when using $JSQ$ \emph{at time slot $t$}. That is, each dispatcher forwards its incoming jobs to the server with the shortest queue (ties are broken randomly). Formally, let $i^* \in argmin_i\set{Q_i(t)}$, then $\forall j \in M$
\begin{equation}
a_i^{j,JSQ}(t) = 
\begin{cases}
a^j(t), & i = i^* \\ 
0,      & i \neq i^*.
\end{cases}
\end{equation}

Let $$a_i^{WR}(t) = \sum_{j=1}^m a_i^{j,WR}(t)$$ be the number of jobs that will be routed to server $i$ at time slot $t$ when using $WR$ \emph{at time slot $t$}. That is, each dispatcher forwards its incoming jobs to a single randomly-chosen server, where the probability of choosing server $i$ is $\frac{\mu_i^{(1)}}{\sum_{i=1}^n \mu_i^{(1)}}$. Formally, $\forall j \in M$, $i=i^*$ with probability $\frac{\mu_i^{(1)}}{\sum_{i=1}^n \mu_i^{(1)}}$ and
\begin{equation}
a_i^{j,WR}(t) = 
\begin{cases}
a^j(t), & i = i^* \\ 
0,      & i \neq i^*
\end{cases}
\end{equation}

With these notations at hand, we continue our analysis by adding and subtracting the term $2\sum_{i=1}^n  a_i^{JSQ}(t) Q_i(t)$ from the right hand side of \eqref{eq:sum_over_servers}. This yields 
\begin{equation}\label{eq:sum_over_servers_add}
\begin{split}
  & \sum_{i=1}^n \bp{Q_i(t+1)}^2 - \sum_{i=1}^n \bp{Q_i(t)}^2 \le \cr
  & B(t) - 2\sum_{i=1}^n Q_i(t)\bp{s_i(t)-a_i^{JSQ}(t)} + \cr
  & 2\sum_{i=1}^n Q_i(t)\bp{a_i(t)-a_i^{JSQ}(t)}.
\end{split}
\end{equation}

We would like to take the expectation of \eqref{eq:sum_over_servers_add}. However, as mentioned, since the actual queue lengths and the local views of the dispatchers and the routing decisions that are made both by our policy and $JSQ$ are dependent, we shall rely on the $WR$ policy and the expected distance of the local views from the actual queue lengths to evaluate the expected values. To that end, we introduce the following lemmas. 

\begin{lemma}\label{lem:JSQ_wr}
For all time slots $t$, it holds that 
$$\sum_{i=1}^n a_i^{JSQ}(t) Q_i(t) \le \sum_{i=1}^n a_i^{WR}(t) Q_i(t).$$ 
\end{lemma}
\begin{proof}
    See Appendix \ref{app:lem:JSQ_wr}.
\end{proof}
\begin{lemma}\label{lem:diff_w_JSQ}
For all servers $i \in N$ and all time slots $t$, it holds that 
\begin{equation*}
\begin{split}
    \sum_{i=1}^n Q_i(t)\bp{a_i(t)-a_i^{JSQ}(t)} \le \sum_{i=1}^n \sum_{j=1}^m a(t)\Big|Q_i(t)-\tilde{Q}_i^j(t)\Big|.
\end{split}
\end{equation*}
\end{lemma}
\begin{proof}
    See Appendix \ref{app:lem:diff_w_JSQ}.
\end{proof}

Applying Lemmas \ref{lem:JSQ_wr} and \ref{lem:diff_w_JSQ} to \eqref{eq:sum_over_servers_add} yields
\begin{equation}\label{eq:sum_over_servers_add_3}
\begin{split}
  &\sum_{i=1}^n \bp{Q_i(t+1)}^2 - \sum_{i=1}^n \bp{Q_i(t)}^2 \le \cr
  & B(t) - 2\sum_{i=1}^n Q_i(t)\bp{s_i(t)-a_i^{WR}(t)} \cr 
  & + 2\sum_{i=1}^n \sum_{j=1}^m a(t) \Big|Q_i(t)-\tilde{Q}_i^j(t)\Big|.
\end{split}
\end{equation}
Taking the expectation of \eqref{eq:sum_over_servers_add_3} yields 
\begin{equation}\label{eq:ex_drift}
\begin{split}
   & \E \Big[\sum_{i=1}^n \bp{Q_i(t+1)}^2\Big] -  \E \Big[\sum_{i=1}^n \bp{Q_i(t)}^2\Big] \le \cr  
   & \E \Big[B(t)\Big] - 2\E \Big[ \sum_{i=1}^n Q_i(t)\bp{s_i(t)-a_i^{WR}(t)}\Big] \cr 
   & + 2\E \Big[ \sum_{i=1}^n \sum_{j=1}^m a(t) \Big|Q_i(t)-\tilde{Q}_i^j(t)\Big|\Big].   
\end{split}
\end{equation}
We observe that both $a(t)$ (according to \eqref{eq:arrival_1}) and $\set{a_i^{WR}(t)}_{i=1}^n$ (according to the definition of the $WR$ policy) are independent of $\set{Q_i(t)}_{i=1}^n$ and $\set{\tilde{Q}_i^j(t) \given{(i,j)\in N \times M}}$. {\new{Specifically, by~\eqref{eq:arrival_1}, $a(t)$ is independent of any events prior to time $t$, including the number of accumulated jobs in the system by time $t$}}. Applying this observation to \eqref{eq:ex_drift} and using the linearity of expectation yields
\begin{equation}\label{eq:ex_drift_2}
\begin{split}
   & \E \Big[\sum_{i=1}^n \bp{Q_i(t+1)}^2\Big] -  \E \Big[\sum_{i=1}^n \bp{Q_i(t)}^2\Big] \le \cr  
   & \E \Big[B(t)\Big] - 2 \sum_{i=1}^n \E \Big[Q_i(t)\Big]\E \Big[\bp{s_i(t)-a_i^{WR}(t)}\Big] \cr 
   & + 2 \sum_{i=1}^n \sum_{j=1}^m \E \Big[a(t)\Big] \E \Big[\Big|Q_i(t)-\tilde{Q}_i^j(t)\Big|\Big].   
\end{split}
\end{equation}
Next, since for any non-negative $\set{x_1,x_2,\ldots,x_n}$ such that $$x=x_1+x_2+\ldots+x_n$$
it always holds that $x^2 \ge \sum_{i=1}^n x_i^2$, using \eqref{eq:service_1}-\eqref{eq:service_3}, the linearity of expectation and \eqref{eq:arrival_1}-\eqref{eq:arrival_3}, we obtain
\begin{equation}\label{eq:exp_b}
\begin{split}
   &\E \Big[B(t)\Big] =  \E \Big[\sum_{i=1}^n \bp{a_i(t)}^2\Big] + \E \Big[\sum_{i=1}^n \bp{s_i(t)}^2\Big] \le \cr
   & \E \Big[\bp{a(t)}^2\Big] + \sum_{i=1}^n \E \Big[\bp{s_i(t)}^2\Big] = \lambda^{(2)} + \sum_{i=1}^n \mu_i^{(2)}.
\end{split}
\end{equation}
Additionally, using \eqref{eq:arrival_1}, \eqref{eq:arrival_2} and \eqref{eq:disp_knowledge} yields
\begin{equation}\label{eq:exp_assump}
\begin{split}
&\sum_{i=1}^n \sum_{j=1}^m \E \Big[a(t)\Big] \E \Big[\Big|Q_i(t)-\tilde{Q}_i^j(t)\Big|\Big] \le \cr
&\sum_{i=1}^n \sum_{j=1}^m \lambda^{(1)} C = m n \lambda^{(1)} C.
\end{split}
\end{equation}
Finally, since the decisions taken by the $WR$ policy are independent of the system state, we can introduce the following lemma.
\begin{lemma}\label{lem:wr_stable}
For all $i \in N$ and $t$ it holds that 
\begin{equation}\label{eq:wr_stable}
    \E \Big[s_i(t)-a_i^{WR}(t)\Big] = \frac{\epsilon\mu_i^{(1)}}{\sum_{i=1}^n \mu_i^{(1)}}.
\end{equation}
\end{lemma}
\begin{proof}
    See Appendix \ref{app:lem:wr_stable}.
\end{proof}

Using \eqref{eq:exp_b}, \eqref{eq:exp_assump} and Lemma \ref{lem:wr_stable} in \eqref{eq:ex_drift_2} yields 
\begin{equation}\label{eq:ex_drift_3}
\begin{split}
   & \E \Big[\sum_{i=1}^n \bp{Q_i(t+1)}^2\Big] -  \E \Big[\sum_{i=1}^n \bp{Q_i(t)}^2\Big] \le \cr 
   & \lambda^{(2)} + \sum_{i=1}^n \mu_i^{(2)} + 2 m n \lambda^{(1)} C \cr 
   & - 2 \sum_{i=1}^n \frac{\epsilon\mu_i^{(1)}}{\sum_{i=1}^n \mu_i^{(1)}} \E \Big[Q_i(t)\Big].   
\end{split}
\end{equation}
For ease of exposition, denote the constants 
\begin{equation}\label{eq:const_1}
D = \lambda^{(2)} + \sum_{i=1}^n \mu_i^{(2)} + 2 m n \lambda^{(1)} C,
\end{equation}
and 
\new{\begin{equation}\label{eq:const_2}
\delta = \frac{\epsilon}{\sum_{i=1}^n \mu_i^{(1)}}.
\end{equation}}

Rearranging \eqref{eq:ex_drift_3} and using \eqref{eq:const_1} and \eqref{eq:const_2} yields
\new{\begin{equation}\label{eq:ex_drift_4}
\begin{split}
    &  2\delta \sum_{i=1}^n \mu_i^{(1)} \cdot \E \Big[Q_i(t)\Big] \le \cr
    & D + \Bigg( \E \Big[\sum_{i=1}^n \bp{Q_i(t)}^2\Big] - \E \Big[\sum_{i=1}^n \bp{Q_i(t+1)}^2\Big]\Bigg).
\end{split}
\end{equation}}
Summing \eqref{eq:ex_drift_4} over time slots $[0,1,\ldots,T-1]$, noticing the telescopic series at the right hand side of the inequality and dividing by $2\delta T$ yields
\new{\begin{equation}\label{eq:ex_drift_5}
\begin{split}
    &  \frac{1}{T} \sum_{t=0}^{T-1} \sum_{i=1}^n \mu_i^{(1)} \cdot  \E \Big[Q_i(t)\Big] \le \frac{D}{2\delta} + \cr
    &  \frac{1}{2\delta T} \Bigg( \E \Big[\sum_{i=1}^n \bp{Q_i(0)}^2\Big] - \E \Big[\sum_{i=1}^n \bp{Q_i(T)}^2\Big]\Bigg).
\end{split}
\end{equation}}
Taking limits of \eqref{eq:ex_drift_5} and making the standard assumption that the system starts its operation with finite queue lengths, \ie $$\E \Big[\sum_{i=1}^n \bp{Q_i(0)}^2\Big] < \infty$$ yields 
\new{\begin{equation}\label{eq:ex_drift_6}
\begin{split}
    &  \limsup_{T \to \infty} \frac{1}{T} \sum_{t=0}^{T-1} \sum_{i=1}^n \mu_i^{(1)} \cdot \E \Big[Q_i(t)\Big] \le \frac{D}{2\delta}.
\end{split}
\end{equation}}
\new{Now, dividing both sides of \eqref{eq:ex_drift_6} by $\min_i \{ \mu_i^{(1)} \}$} yields
\new{\begin{equation}\label{eq:ex_drift_7}
\begin{split}
    &  \limsup_{T \to \infty} \frac{1}{T} \sum_{t=0}^{T-1} \sum_{i=1}^n \frac{\mu_i^{(1)} \cdot \E \Big[Q_i(t)\Big]}{\min_i \{ \mu_i^{(1)}\}} \le \frac{D}{2\delta \cdot {\min_i \{ \mu_i^{(1)}\}}}.
\end{split}
\end{equation}}
\new{Finally, using the fact that $$\frac{\mu_i^{(1)} \cdot \E \Big[Q_i(t)\Big]}{\min_i \{ \mu_i^{(1)}\}} \ge \E \Big[Q_i(t)\Big] \quad \forall i$$ in \eqref{eq:ex_drift_7}} we obtain 
\new{\begin{equation}\label{eq:ex_drift_8}
\begin{split}
    &  \limsup_{T \to \infty} \frac{1}{T} \sum_{t=0}^{T-1} \sum_{i=1}^n \E \Big[Q_i(t)\Big] \le \frac{D}{2\delta \cdot {\min_i \{ \mu_i^{(1)}\}}}.
\end{split}
\end{equation}}
This implies strong stability and concludes the proof.
\end{proof}
\new{Note that the result in \eqref{eq:ex_drift_6} yields a slightly stronger bound than the one achieved in \eqref{eq:ex_drift_8}. This is because in \eqref{eq:ex_drift_6} we have a bound on the {service-rate-weighted} time-averaged expected sum of queue lengths that is not sensitive to the skew among server service rates. That is, even for an arbitrary large skew (\ie when $\frac{\min_i \{ \mu_i^{(1)}\}}{\max_i \{ \mu_i^{(1)}\}}$ is arbitrary small), the bound does not grow to infinity. Nevertheless, we provide the result in \eqref{eq:ex_drift_7} to obtain the standard form of strong stability (without service-rate normalization of the queue sizes).} 

\section{Simplified stability conditions}

As mentioned, in order to establish that a system that uses an \name{} policy is strongly stable, it is sufficient to show that Assumption \ref{asm_2} holds. Generally, it may be challenging to establish this. To that end, we now develop simplified sufficient conditions. As we later demonstrate, these simplified conditions capture broad families of communication techniques between the servers and the dispatchers, and allow for the design of stable policies with appealing performance and extremely low communication budgets.

Throughout the proofs, we assume that our sufficient condition always holds when the system starts its operation, namely that  there exists a constant $C_0 \ge 0$ such that
\begin{equation}\label{eq:time_0_approx}
    \E \Big[\big|Q_i(0) - \tilde{Q}_i^j(0)\big| \Big] \le C_0 \,\, \forall \, (i,j) \in N \times M.
\end{equation}
Also, we denote $\mathbbm{1}_i^j(t)$ as an indicator function that obtains the value $1$ iff server $i$ updates dispatcher $j$ (via the push-based sampling or the pull-based update message from the server) with its actual queue length at the end of time slot $t$ (after arrivals and departures at time slot $t$).

\subsection{Stochastic updates}

We now prove that in \name{}, it is sufficient for the system to be strongly stable if for any server $i$, any dispatcher $j$, and any current local state error at dispatcher $j$ for server $i$, there is a strictly positive probability that dispatcher $j$ receives an update from server $i$. Intuitively, it means that the dispatcher may be rarely updated, but expected times between updates are still finite, and therefore errors do not grow unbounded in expectation.

\begin{theorem}\label{thm:simple_cond}
Assume that there exists $\bar{\epsilon}>0$ such that
\begin{equation}\label{eq:simp_asm}
    \E \Big[\mathbbm{1}_i^j(t) \,\, \Big| \,\, \big|Q_i(t) - \tilde{Q}_i^j(t)\big| \Big] > \bar{\epsilon} \quad \forall (i,j,t) \in N \times M \times \mathbb{N}.
\end{equation}
Then, Assumption \ref{asm_2} holds and the system is strongly stable. 
\end{theorem}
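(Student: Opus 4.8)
The plan is to show that hypothesis~\eqref{eq:simp_asm} forces the expected local-view error to obey a contractive scalar recursion in $t$, which produces the uniform bound required by Assumption~\ref{asm_2}; strong stability then follows verbatim from Theorem~\ref{thm:1}. Fix a server--dispatcher pair $(i,j)$ and abbreviate the error by $E_i^j(t):=\big|Q_i(t)-\tilde{Q}_i^j(t)\big|$. The first step is to write down how $E_i^j$ changes over one time slot. If server $i$ updates dispatcher $j$ at the end of slot $t$, i.e.\ $\mathbbm{1}_i^j(t)=1$, then $\tilde{Q}_i^j(t+1)=Q_i(t+1)$ and the error is reset: $E_i^j(t+1)=0$. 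If $\mathbbm{1}_i^j(t)=0$, then over that slot the local view of server $i$ at dispatcher $j$ moves by no more than the jobs $j$ itself forwarded to $i$ (or does not move at all, depending on the bookkeeping convention), while by~\eqref{eq:workload_dynamics} the true queue moves by at most $a_i(t)+s_i(t)\le a(t)+s_i(t)$; the triangle inequality then yields $E_i^j(t+1)\le E_i^j(t)+a(t)+s_i(t)$. Combining the two cases gives
\[
  E_i^j(t+1)\ \le\ \big(1-\mathbbm{1}_i^j(t)\big)\Big(E_i^j(t)+a(t)+s_i(t)\Big).
\]

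Next I would take expectations and bound the two resulting terms. For the ``new error'' part, using $0\le 1-\mathbbm{1}_i^j(t)\le 1$, $a(t)+s_i(t)\ge 0$, and~\eqref{eq:arrival_2},~\eqref{eq:service_2}, one gets $\E\big[(1-\mathbbm{1}_i^j(t))(a(t)+s_i(t))\big]\le \lambda^{(1)}+\mu_i^{(1)}$; set $\bar{G}:=\lambda^{(1)}+\max_i\mu_i^{(1)}$. For the ``carried error'' part, since $E_i^j(t)\ge 0$ is measurable with respect to $\sigma\big(E_i^j(t)\big)$, the tower property gives $\E\big[(1-\mathbbm{1}_i^j(t))\,E_i^j(t)\big]=\E\big[E_i^j(t)\,(1-\E[\mathbbm{1}_i^j(t)\mid \sigma(E_i^j(t))])\big]\le (1-\bar{\epsilon})\,\E\big[E_i^j(t)\big]$, where the last inequality is exactly where hypothesis~\eqref{eq:simp_asm} enters. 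Writing $x_t:=\E\big[E_i^j(t)\big]$, this gives the recursion $x_{t+1}\le (1-\bar{\epsilon})x_t+\bar{G}$, with $0<\bar{\epsilon}<1$ ($\bar{\epsilon}\ge 1$ being impossible since $\mathbbm{1}_i^j(t)\in\{0,1\}$) and, by~\eqref{eq:time_0_approx}, $x_0\le C_0$. Unrolling, $x_t\le (1-\bar{\epsilon})^tC_0+\bar{G}\sum_{k\ge 0}(1-\bar{\epsilon})^k\le C_0+\bar{G}/\bar{\epsilon}$ for every $t$, and this bound does not depend on $(i,j)$. Hence Assumption~\ref{asm_2} holds with $C:=C_0+\bar{G}/\bar{\epsilon}$, and Theorem~\ref{thm:1} delivers strong stability.

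The hard part will be the conditioning in the ``carried error'' step. The update indicator $\mathbbm{1}_i^j(t)$ is realized at the \emph{end} of slot $t$ and may be statistically dependent on the arrivals $a(t)$ and services $s_i(t)$ of that same slot, so the expectation of the product above cannot be factored naively; moreover~\eqref{eq:simp_asm} only controls $\mathbbm{1}_i^j(t)$ conditionally on the \emph{beginning-of-slot} error $E_i^j(t)$, not on the full history. The way around this is to condition solely on $\sigma\big(E_i^j(t)\big)$ --- precisely the information for which~\eqref{eq:simp_asm} supplies a bound --- and to discard the (possibly adversarial) dependence between $\mathbbm{1}_i^j(t)$ and $a(t)+s_i(t)$ through $1-\mathbbm{1}_i^j(t)\le 1$, which is lossless for an upper bound. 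A secondary point worth spelling out is the one-slot drift bound $E_i^j(t+1)\le E_i^j(t)+a(t)+s_i(t)$ used above: it relies on the natural convention that, absent an update, a dispatcher's local estimate of a server's queue changes by no more than the jobs that dispatcher itself routed there, so the change it fails to track is at most the other dispatchers' arrivals plus that server's departures (one checks the bound survives the $[\cdot]^+$ truncation in~\eqref{eq:workload_dynamics} by a short case split).
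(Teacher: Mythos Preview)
Your proposal is correct and follows essentially the same route as the paper: bound the one-step error by $(1-\mathbbm{1}_i^j(t))\big(E_i^j(t)+a(t)+s_i(t)\big)$, apply the tower property conditioning on $E_i^j(t)$ to extract the contraction factor $(1-\bar\epsilon)$ from hypothesis~\eqref{eq:simp_asm}, and then solve the resulting scalar recursion. The only cosmetic differences are that the paper packages the recursion step as a standalone lemma (yielding the bound $\max\{(\lambda^{(1)}+\mu_i^{(1)})/\bar\epsilon,\,C_0\}$) whereas you unroll it directly to $C_0+\bar G/\bar\epsilon$, and the paper splits off the $a(t)+s_i(t)$ term before taking expectations rather than after; neither affects the argument.
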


\begin{proof}
Fix server $i$ and dispatcher $j$. Denote 
$$Z(t) = \big|Q_i(t) - \tilde{Q}_i^j(t)\big|.$$ 
Now, for all $t$ it holds that
\begin{equation}\label{eq:z_t}
\begin{split}
&Z(t+1) \le \big(1-\mathbbm{1}_i^j(t)\big) \cdot \Big(Z(t) + a_i(t) + s_i(t)\Big) \le \cr 
&\big(1-\mathbbm{1}_i^j(t)\big) \cdot Z(t) + a(t) + s_i(t).
\end{split}
\end{equation}
Taking expectation of \eqref{eq:z_t} yields
\begin{equation}\label{eq:z_t_exp}
\E [Z(t+1)] \le \E[(1-\mathbbm{1}_i^j(t)\big) \cdot Z(t)] + \lambda^{(1)} + \mu_i^{(1)}.
\end{equation}
Next, using the law of total expectation
\begin{equation}\label{eq:z_t_exp_2}
\begin{split}
  & \E[(1-\mathbbm{1}_i^j(t)\big) \cdot Z(t)] = \cr 
  & \E\Big[\E\big[(1-\mathbbm{1}_i^j(t)\big) \cdot Z(t) \,\big|\, Z(t)\big]\Big] =\cr
  & \E \Big[ Z(t) \cdot \E\big[(1-\mathbbm{1}_i^j(t)\big) \,\big|\, Z(t)\big]\Big] \le (1-\bar{\epsilon})\E[Z(t)],
\end{split}
\end{equation}
where the last inequality follows from the linearity of expectation and \eqref{eq:simp_asm}. Now, using \eqref{eq:z_t_exp_2} in \eqref{eq:z_t_exp} yields
\begin{equation}\label{eq:z_t_exp_3}
\begin{split}
&\E [Z(t+1)] \le  (1-\bar{\epsilon})\E[Z(t)] + \lambda^{(1)} + \mu_i^{(1)}. 
\end{split}
\end{equation}
We now introduce the following lemma.
\begin{lemma}\label{lem:gap_reccurence}
Fix $\epsilon \in (0,1]$, $C_1 \ge 0$ and $C_2 \ge 0$. Consider the recurrence 
$$T(n+1) \le (1-\epsilon) \cdot T(n) + C_1,$$ 
with the initial condition 
$$T(0) \le C_2.$$ 
Then, 
$$T(n) \le \max \set{\frac{C_1}{\epsilon}, C_2} \,\, \forall \, n.$$
\end{lemma}
\begin{proof}
    See Appendix \ref{app:lem:gap_reccurence}.
\end{proof}
Finally, using Lemma \ref{lem:gap_reccurence} in \eqref{eq:z_t_exp_3} yields
$$\E [Z(t)] \le \max \set{\frac{\lambda^{(1)} +\mu_i^{(1)}}{\bar{\epsilon}}, C_0} \,\, \forall \, t.$$
This concludes the proof.
\end{proof}

\subsection{Deterministic updates}

We proceed to establish that any \name{} policy, in which each local view entry is updated at least once every fixed number $C_{up}$ of time slots, is strongly stable.    

\begin{theorem}\label{thm:simple_cond_2}
Assume that each local entry is updated at least once every $C_{up} \in \mathbbm{N}$ time slots. Then Assumption \ref{asm_2} holds and the system is strongly stable.
\end{theorem}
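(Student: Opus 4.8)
The plan is to verify Assumption~\ref{asm_2} directly and then invoke Theorem~\ref{thm:1}. Fix a server $i$ and a dispatcher $j$ and, exactly as in the proof of Theorem~\ref{thm:simple_cond}, write $Z(t)=\big|Q_i(t)-\tilde Q_i^j(t)\big|$. The starting point is the same one-slot recurrence~\eqref{eq:z_t}: if dispatcher $j$ is not updated by server $i$ at the end of slot $t$ then $\tilde Q_i^j(t+1)=\tilde Q_i^j(t)$, and since $Q_i(t+1)=[Q_i(t)+a_i(t)-s_i(t)]^+$ with $Q_i(t)\ge 0$ we have $|Q_i(t+1)-Q_i(t)|\le |a_i(t)-s_i(t)|\le a_i(t)+s_i(t)\le a(t)+s_i(t)$, so by the triangle inequality $Z(t+1)\le Z(t)+a(t)+s_i(t)$; and if an update does occur at the end of slot $t$, then $\tilde Q_i^j(t+1)=Q_i(t+1)$ and $Z(t+1)=0$. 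The difference from Theorem~\ref{thm:simple_cond} is only in how we exploit the update pattern: instead of a conditional lower bound on the update probability, we now have the guarantee that the indicator $\mathbbm{1}_i^j$ equals $1$ at least once in every window of $C_{up}$ consecutive slots.

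So, fix $t$ and let $\tau$ be the last slot in $\{0,1,\dots,t-1\}$ at the end of which server $i$ updated dispatcher $j$ (set $\tau=-1$ if there is none). The hypothesis forces $t-\tau\le C_{up}$ when $\tau\ge 0$, and forces $t\le C_{up}$ when $\tau=-1$. Unrolling the recurrence from slot $\tau+1$, where $Z(\tau+1)=0$ if $\tau\ge 0$ and $Z(0)$ if $\tau=-1$, and using that all the increments $a(s)+s_i(s)$ are nonnegative, gives the pathwise bound
\begin{equation*}
Z(t)\ \le\ \mathbbm{1}\{t\le C_{up}\}\cdot Z(0)\ +\sum_{s=\max\{0,\,t-C_{up}\}}^{t-1}\big(a(s)+s_i(s)\big).
\end{equation*}
Taking expectations, using $\E[Z(0)]\le C_0$ from~\eqref{eq:time_0_approx}, $\E[a(s)]=\lambda^{(1)}$ from~\eqref{eq:arrival_1}--\eqref{eq:arrival_2}, $\E[s_i(s)]=\mu_i^{(1)}$ from~\eqref{eq:service_1}--\eqref{eq:service_2}, and the fact that the sum has at most $C_{up}$ terms, yields $\E[Z(t)]\le C_0+C_{up}\big(\lambda^{(1)}+\mu_i^{(1)}\big)$ for every $t$. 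Since this is uniform in $t$ and bounded over all servers by $C = C_0+C_{up}\big(\lambda^{(1)}+\max_i\mu_i^{(1)}\big)$, Assumption~\ref{asm_2} holds with that constant, and Theorem~\ref{thm:1} then gives strong stability.

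The only point needing care is that $\tau$ is a random variable that may depend on the whole history of the system (an \name{} update rule can be state-dependent). I sidestep any conditioning on $\tau$ by enlarging the summation window to the \emph{deterministic} interval $[\max\{0,t-C_{up}\},\,t-1]$ before taking the expectation --- which is legitimate because the discarded terms are nonnegative --- so the final step requires only linearity of expectation and the first-moment assumptions~\eqref{eq:arrival_2} and~\eqref{eq:service_2}, with no independence between the update schedule, the arrivals, and the services. The rest is routine bookkeeping; no analogue of Lemmas~\ref{lem:JSQ_wr}--\ref{lem:wr_stable} is required here, since that machinery was already absorbed into Theorem~\ref{thm:1}.
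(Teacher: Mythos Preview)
Your proof is correct and follows essentially the same route as the paper: bound the one-slot drift of $Z(t)=|Q_i(t)-\tilde Q_i^j(t)|$ by $a(t)+s_i(t)$ when no update occurs, then use the deterministic $C_{up}$-window guarantee to cap the accumulated error by $\sum_{s=t-C_{up}}^{t-1}(a(s)+s_i(s))$ (plus the initial $C_0$ term for small $t$) and take expectations. Your explicit remark about sidestepping the randomness of $\tau$ by enlarging to the deterministic window $[\max\{0,t-C_{up}\},t-1]$ before taking expectations is a point the paper leaves implicit, but the argument is the same.
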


\begin{proof}
Fix server $i$ and dispatcher $j$.
Then,  
\begin{equation}
   |Q_i(t+C_{up}) - \tilde{Q}_i^j(t+C_{up})| \le \sum_{\tau=t}^{t+C_{up}-1} (a(\tau)+s_i(\tau)) 
\end{equation}
This is because the last update of this entry happened at most $C_{up}$ time slots ago. Now, by taking the expectation, we obtain
\begin{equation}
\begin{split}
       &\E\big[|Q_i(t+C_{up}) - \tilde{Q}_i^j(t+C_{up})|\big] \le \cr 
       &\E\sbrac{\sum_{\tau=t}^{t+C_{up}-1} (a(\tau)+s_i(\tau))} = C_{up}(\lambda^{(1)}+\mu_i^{(1)})  
\end{split}
\end{equation}
On the other hand, for any $t < C_{up}$ it holds that
\begin{equation}
\begin{split}
       \E\big[|Q_i(t) - \tilde{Q}_i^j(t)|\big] & \le C_0 + \E\big[\sum_{\tau=0}^{C_{up}-1} (a(\tau)+s_i(\tau))\big] \cr 
       & \le C_0 + C_{up}(\lambda^{(1)}+\mu_i^{(1)})
\end{split}
\end{equation}
This concludes the proof.
\end{proof}


\section{Example \name{} policies}

Since \name{} is not restricted to work with either pull- or push-based communications, in this section we provide examples for both. In a push-based policy, the dispatchers sample the servers for their queue lengths, whereas in a pull-based policy the servers update the dispatchers with their queue lengths. While empirically we will see that the pull-based approach can provide better performance in many scenarios, it may also incur additional implementation overhead, as it requires the servers to actively update the dispatchers given some state conditions, rather than passively answer sample queries. Therefore, we consider both the push and pull frameworks.

\subsection{Push-based \name{} example}
The power--of-choice $JSQ(d)$ policy forms a popular low-communication push-based load balancing approach, but it is not stable in heterogeneous systems, even for a single dispatcher. Instead, we will now analyze a push-based \name{} policy that uses exactly the same communication pattern between the servers and the dispatchers. It essentially extends the policy of \cite{jonatha2018power}, which considers a single dispatcher and homogeneous servers, to multiple dispatchers with heterogeneous servers.

In this policy, which we call \name{}-$Sample(d)$ and describe in Algorithm \ref{alg:lib_poc}, each dispatcher holds a local, possibly outdated, array of the server queue lengths, and sends jobs to the minimum one among them. The array entries are updated as follows: (1) when a dispatcher sends jobs to a server, these jobs are added to the respective local approximation; (2)~at each time slot, if new jobs arrive, the dispatcher randomly samples $d$ distinct queues and uses this information only to update the respective $d$ distinct entries in its local array to their actual value.

The simplicity of \name{}-$Sample(d)$ may be surprising. For instance, there is no attempt to guess or estimate how the other dispatchers send traffic or how the queue drains to get a better estimate, \ie our estimate is based only on the jobs that the specific dispatcher sends and the last time it sampled a queue. We also do not take the age of the information into account. 

Furthermore, as we find below,  the stability proof of \name{}-$Sample(d)$ only relies on the sample messages and not on the job increments. We empirically find that these increments help improve the estimation quality and therefore the performance.

\begin{algorithm}[!t]
\small
\caption{\name{}-$Sample(d)$ (push-based comm.)}
\label{alg:generator}
\SetKwProg{generate}{}{}{end}
\uline{Code for dispatcher $j \in M$:}\\
\smallskip
\generate{\textbf{Route jobs and update local state:}}{
    \ForEach{time slot $t$}{
        Forward jobs to server $i^* \in argmin_i \set{\tilde{Q}_i^j(t)}$\;
        Update $\tilde{Q}_{i^*}^j(t) \gets \tilde{Q}_{i^*}^j(t) + a^j(t)$\;
    }
}
\smallskip
\generate{\textbf{Sample servers and update local state:}}{
    \ForEach{time slot $t$}{
        \uIf{new jobs arrive at time slot $t$}{
            Uniformly at random pick distinct $i_1, \ldots,i_d \in N$\;
            For each $i \in \{i_1,\dots,i_d\}$ update $\tilde{Q}_{i}^j(t) \gets Q_{i}(t)$\;
        }
    }
}
\label{alg:lib_poc}
\end{algorithm}

We proceed to establish that using \name{}-$Sample(d)$ at each dispatcher results in strong stability in multi-dispatcher heterogeneous systems. Interestingly, this result holds even for $d=1$.

\begin{proposition}\label{prop:1}
Assume that the system uses \name{}-$Sample(d)$. Then, it is strongly stable. 
\end{proposition}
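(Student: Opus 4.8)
The plan is to deduce Proposition~\ref{prop:1} from Theorem~\ref{thm:simple_cond}: I will check that \name{}-$Sample(d)$ satisfies the stochastic-update hypothesis~\eqref{eq:simp_asm}, and strong stability then follows, since Theorem~\ref{thm:simple_cond} delivers Assumption~\ref{asm_2} and Theorem~\ref{thm:1} delivers stability. Fix an arbitrary triple $(i,j,t)\in N\times M\times\mathbb{N}$, write $Z(t)=\bigl|Q_i(t)-\tilde{Q}_i^j(t)\bigr|$, and let $\mathcal{F}_{t-1}$ denote the system history up to (and excluding) the arrival/routing step of slot $t$, so that $Z(t)$ is $\mathcal{F}_{t-1}$-measurable.

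One point must be checked before Theorem~\ref{thm:simple_cond} can be quoted: its proof rests on the drift bound $Z(t+1)\le(1-\mathbbm{1}_i^j(t))\bigl(Z(t)+a_i(t)+s_i(t)\bigr)$, but the indicator $\mathbbm{1}_i^j$ only records the push-sampling refreshes, whereas \name{}-$Sample(d)$ also mutates the local view through the increment rule $\tilde{Q}_{i^*}^j\gets\tilde{Q}_{i^*}^j+a^j(t)$. I expect this to be the only genuinely delicate step, so I would dispose of it first. When $\mathbbm{1}_i^j(t)=1$, the $i$-th entry is overwritten with server $i$'s end-of-slot queue length, so $Z(t+1)=0$. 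When $\mathbbm{1}_i^j(t)=0$, apply the elementary fact $\bigl|[A]^+-B\bigr|\le|A-B|$ (valid for $B\ge 0$) with $A=Q_i(t)+a_i(t)-s_i(t)$ and $B=\tilde{Q}_i^j(t)+a^j(t)\mathbbm{1}\{i^*=i\}\ge 0$, followed by the triangle inequality and $0\le a_i(t)-a^j(t)\mathbbm{1}\{i^*=i\}\le a_i(t)$; this returns $Z(t+1)\le Z(t)+a_i(t)+s_i(t)$. So the increments never loosen the recurrence (in fact they tighten it when dispatcher $j$ routes to server $i$), and the argument of Theorem~\ref{thm:simple_cond} applies verbatim once~\eqref{eq:simp_asm} is verified.

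It then remains to exhibit $\bar{\epsilon}>0$, independent of $(i,j,t)$ and of the realized value of $Z(t)$, with $\E[\mathbbm{1}_i^j(t)\mid Z(t)]>\bar{\epsilon}$; by the tower property it suffices to obtain $\E[\mathbbm{1}_i^j(t)\mid\mathcal{F}_{t-1}]>\bar{\epsilon}$. By Algorithm~\ref{alg:lib_poc}, server $i$ refreshes dispatcher $j$'s $i$-th entry in slot $t$ iff (a)~$a^j(t)>0$ and (b)~$i$ belongs to the size-$d$ subset $S^j(t)$ that dispatcher $j$ draws uniformly at random from $N$ in that slot. Both the arrival batch $(a^j(t))_j$ and $S^j(t)$ are produced at slot $t$ independently of $\mathcal{F}_{t-1}$ (the total arrival is i.i.d.\ over slots by~\eqref{eq:arrival_1}, the split does not depend on the system state, and $S^j(t)$ uses fresh randomness), and $S^j(t)$ is moreover independent of $(a^j(t))_j$. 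Hence $\mathbb{P}(a^j(t)>0\mid\mathcal{F}_{t-1})=\mathbb{P}(a^j(t)>0)>\epsilon_0$ by~\eqref{eq:arrival_4}, while $\mathbb{P}(i\in S^j(t)\mid\mathcal{F}_{t-1},\,a^j(t)>0)=\binom{n-1}{d-1}/\binom{n}{d}=d/n$. Multiplying, $\E[\mathbbm{1}_i^j(t)\mid\mathcal{F}_{t-1}]>\tfrac{d}{n}\epsilon_0$, so $\bar{\epsilon}:=\tfrac{d}{n}\epsilon_0>0$ works uniformly in $(i,j,t)$ — and, notably, already for $d=1$. With~\eqref{eq:simp_asm} established, Theorem~\ref{thm:simple_cond} gives Assumption~\ref{asm_2}, hence strong stability, which completes the proof. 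Consistently with the surrounding text, this uses only the sample messages; the job increments enter the proof only through the favorable direction of the drift inequality.
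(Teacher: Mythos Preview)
Your proof is correct and follows the same route as the paper: both verify the stochastic-update hypothesis~\eqref{eq:simp_asm} of Theorem~\ref{thm:simple_cond} by observing that, independently of the system state, dispatcher $j$ samples server $i$ with probability at least $\epsilon_0\cdot d/n$, and then invoke Theorem~\ref{thm:simple_cond} (and hence Theorem~\ref{thm:1}). Your additional check that the increment rule $\tilde{Q}_{i^*}^j\gets\tilde{Q}_{i^*}^j+a^j(t)$ does not break the drift inequality underlying Theorem~\ref{thm:simple_cond} is a point the paper leaves implicit, so your write-up is in fact slightly more careful on that front.
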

\begin{proof}
Fix dispatcher $j$ and server $i$. Consider time slot $t$. By \eqref{eq:arrival_4}, with probability of at least $\epsilon_0$, dispatcher $j$ samples $d$ out of $n$ servers uniformly at random disregarding the system state at time slot $t$. Therefore, we obtain 
$$\E \Big[\mathbbm{1}_i^j(t) \,\, \Big| \,\, \big|Q_i(t) - \tilde{Q}_i^j(t)\big| \Big] \ge \frac{\epsilon_0 \cdot d}{n}.$$
This respects the simplified probabilistic sufficiency condition and thus concludes the proof. 
\end{proof}

\subsection{Pull-based \name{} example}

$JIQ$ is a popular, recently proposed, low-communication pull-based load balancing policy. It offers a low communication overhead that is upper-bounded by a single message per job~\cite{lu2011join}. However, as mentioned, for heterogeneous systems, $JIQ$ is not stable even for a single dispatcher. 

We now propose a different pull-based \name{} policy that conforms with the same communication upper bound, namely a single message per job, and leverages the important idleness signals from the servers. It essentially follows similar lines to the policy presented in \cite{van2019hyper}, which considers a single dispatcher and homogeneous servers, and extends it to multiple dispatchers with heterogeneous servers. 

Specifically, each server, upon the completion of one or several jobs at the end of a time slot, sends its queue length to a dispatcher, which is chosen uniformly at random, using the following rule: (1) if the server becomes idle, then the message is sent with probability 1; (2) otherwise, the message is sent with probability $0 < p \le 1$ where $p$ is a fixed, arbitrary small, parameter. 

Algorithm \ref{alg:lib_jiq} (termed \name{}-$Update(p)$) depicts the actions taken by each dispatcher at each time slot.

The intuition behind this approach is to always leverage the idleness signals in order to avoid immediate starvation as done by $JIQ$; yet, in contrast to $JIQ$, even when no servers are idle, we want to make sure that the local views are not too far from the real queue lengths, which provides significant advantage at high loads. 

\begin{algorithm}[!t]
\small
\caption{\name{}-$Update(p)$ (pull-based comm.)}
\label{alg:generator2}
\SetKwProg{generate}{}{}{end}
\uline{Code for dispatcher $j \in M$:}\\
\smallskip
\generate{\textbf{Route jobs and update local state:}}{
    \ForEach{time slot $t$}{
        Forward jobs to server $i^* \in argmin_i \set{\tilde{Q}_i^j(t)}$\;
        Update $\tilde{Q}_{i^*}^j(t) \gets \tilde{Q}_{i^*}^j(t) + a^j(t)$\;
    }
}
\smallskip
\generate{\textbf{Update local state:}}{
    \ForEach{arrived message $\langle i , q \rangle$ at time slot $t$}{
        Update $\tilde{Q}_{i}^j(t) \gets q$\;
    }
}
\uline{Code for server $i \in N$:}\\
\smallskip
\generate{\textbf{Send update message:}}{
    \ForEach{time slot $t$}{
        \uIf{completed jobs at time slot $t$}{
            Uniformly at random pick  $j \in M$\;
            \uIf{idle}{
                Send $\langle i , Q_i(t) \rangle$ to dispatcher $j$\;
            }
            \Else{
                Send $\langle i , Q_i(t) \rangle$ to dispatcher $j$ $w.p.$ $p$\;
            }
        }
    }
}
\label{alg:lib_jiq}
\end{algorithm}

We now formally prove that using \name{}-$Update(p)$ results in strong stability in multi-dispatcher heterogeneous systems. Interestingly, this result holds for any $p>0$.

\begin{proposition}\label{prop:2}
Assume that the system uses \name{}-$Update(p)$. Then, it is strongly stable.
\end{proposition}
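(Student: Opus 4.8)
The plan is to reduce to Theorem~\ref{thm:simple_cond} by verifying its hypothesis~\eqref{eq:simp_asm}: I would seek a constant $\bar\epsilon>0$ with $\E\big[\mathbbm{1}_i^j(t)\,\big|\,|Q_i(t)-\tilde Q_i^j(t)|\big]>\bar\epsilon$ for all $(i,j,t)$, after which strong stability is immediate. Fix $(i,j,t)$ and condition on the full system state $\mathcal{F}_t$ at the start of slot $t$ (a bound that holds conditionally on $\mathcal{F}_t$ also holds under the coarser conditioning of~\eqref{eq:simp_asm}, by the tower property). Under \name{}-$Update(p)$, server $i$ updates dispatcher $j$ in slot $t$ precisely when it (i)~completes at least one job in that slot, (ii)~draws $j$ from the uniform distribution over $M$ — a choice independent of the state and of~(i), of probability $1/m$ — and (iii)~decides to transmit, which it does with probability $1$ if it becomes idle and with probability $p$ otherwise, hence with probability at least $p$ given~(i). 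Therefore $\E[\mathbbm{1}_i^j(t)\mid\mathcal{F}_t]\ge\frac{p}{m}\,\mathbbm{P}(\text{server }i\text{ completes a job in slot }t\mid\mathcal{F}_t)$. Since admissibility forces $\mu_i^{(1)}>0$ (cf.~\eqref{eq:ex_drift_7}) and $\{s_i(t)\}$ is $i.i.d.$ and independent of everything else, $q_i:=\mathbbm{P}(s_i(0)\ge 1)>0$; moreover, whenever $Q_i(t)\ge 1$ the server has work available and thus completes a job whenever $s_i(t)\ge 1$. This already yields $\E[\mathbbm{1}_i^j(t)\mid\mathcal{F}_t]\ge\frac{p\,q_i}{m}=:\bar\epsilon>0$ on the event $\{Q_i(t)\ge 1\}$.

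The remaining case, $Q_i(t)=0$ — where $|Q_i(t)-\tilde Q_i^j(t)|=\tilde Q_i^j(t)$ — is where I expect the real difficulty. If $\tilde Q_i^j(t)$ is stale and large while server $i$ happens not to lie in the $\arg\min$ of any dispatcher's local view, then no dispatcher routes work to $i$, so $a_i(t)=0$, server $i$ completes nothing, and no update is triggered; thus the pointwise bound~\eqref{eq:simp_asm} can genuinely fail on such states, and the hard part will be to show this situation cannot persist so that the gap is still bounded in expectation. I would exploit two facts: first, while $Q_i(t)=0$ and dispatcher $j$ does not forward to $i$, the estimate $\tilde Q_i^j$ is unchanged — it only ever increases by the fresh arrivals $a^j(\cdot)$ that $j$ sends to $i$, which also enter $Q_i(\cdot)$, so the gap can grow by at most $s_i(\cdot)$ in such a slot — hence it does not run away; second, any dispatcher $k$ whose estimate of server $i$ already equals the true value $0$ necessarily has $i$ in its $\arg\min$ set, so (by~\eqref{eq:arrival_4} and random tie-breaking) it forwards at least one job to $i$ with probability at least $\epsilon_0/n$ per slot, after which $Q_i>0$, server $i$ drains the work through completions, and each completion updates dispatcher $j$ with probability at least $p/m$; hence the expected time to reset the gap at $j$ is bounded uniformly in its size.

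Combining the two regimes should give a drift inequality — one-step, or over a bounded number of steps — of the form $\E\big[|Q_i(t{+}1)-\tilde Q_i^j(t{+}1)|\mid\mathcal{F}_t\big]\le(1-\bar\epsilon)\,|Q_i(t)-\tilde Q_i^j(t)|+\lambda^{(1)}+\mu_i^{(1)}$, into which Lemma~\ref{lem:gap_reccurence} feeds exactly as in the proof of Theorem~\ref{thm:simple_cond} to give $\E[|Q_i(t)-\tilde Q_i^j(t)|]\le\max\{(\lambda^{(1)}+\mu_i^{(1)})/\bar\epsilon,\,C_0\}$, i.e.\ Assumption~\ref{asm_2}, hence strong stability. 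An alternative that sidesteps reasoning about the other servers is to bound the gap directly by tracing $\tilde Q_i^j(t)$ back to its most recent genuine update at a time $t^\star$: since then $\tilde Q_i^j$ has increased only by the jobs $j$ itself routed to $i$, while on the true side every epoch in which $Q_i$ empties out is preceded by a completion (an update opportunity for $j$), so a large gap forces many service-eligible slots whose updates all missed $j$, an event of geometrically small probability, again bounding $\E[|Q_i(t)-\tilde Q_i^j(t)|]$. The step I expect to need the most care is the non-persistence claim (the second fact above): making it rigorous requires controlling how fast the \emph{other} dispatchers' stale views of the \emph{other} servers are corrected, so that server $i$ reliably re-enters some dispatcher's $\arg\min$ set within bounded expected time.
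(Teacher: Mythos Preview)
Your overall plan—verify hypothesis~\eqref{eq:simp_asm} of Theorem~\ref{thm:simple_cond} by conditioning on the full state and splitting on whether $Q_i(t)>0$ or $Q_i(t)=0$—is precisely the paper's, and your treatment of the case $Q_i(t)>0$ matches the paper's Case~(2).

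The gap is in the case $Q_i(t)=0$. You assert that the bad state ``$Q_i(t)=0$ while $i$ lies in \emph{no} dispatcher's $\arg\min$'' can occur, so that~\eqref{eq:simp_asm} fails pointwise there and a multi-step drift argument (possibly involving the other servers' stale views) is needed. But under \name{}-$Update(p)$ that state is unreachable, and this is exactly the observation you are missing. Whenever server $i$ \emph{becomes} idle it notifies some dispatcher $j^\star$ with probability~$1$, resetting $\tilde Q_i^{j^\star}$ to~$0$; as long as $Q_i$ subsequently remains~$0$, no dispatcher (in particular not $j^\star$) routes to $i$—else $Q_i$ would turn positive—so $\tilde Q_i^{j^\star}$ receives no increments, and $i$ completes nothing so it sends no further messages that could overwrite $j^\star$'s view. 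Hence at every $t$ with $Q_i(t)=0$ there is some $j^\star$ with $\tilde Q_i^{j^\star}(t)=0$, and $i$ automatically sits in $j^\star$'s $\arg\min$. The paper then closes the case in one step: by~\eqref{eq:arrival_4} and random tie-breaking $j^\star$ forwards work to $i$ with probability at least $\epsilon_0/n$, and independently $s_i(t)>0$ with positive probability, so $i$ completes a job and updates dispatcher $j$ with probability at least a fixed multiple of $p/m$. Thus~\eqref{eq:simp_asm} holds uniformly in both cases and Theorem~\ref{thm:simple_cond} applies directly—no multi-step drift, no control of other servers' views, is needed. Your alternative routes attack a difficulty that the ``idle $\Rightarrow$ notify w.p.~$1$'' rule has already eliminated.
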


\begin{proof}
We prove that \eqref{eq:simp_asm} holds. Fix dispatcher $j$, server $i$ and time slot $t$. We examine two possible events at the beginning of time slot $t$: (1) $Q_i(t)=0$ and (2) $Q_i(t)>0$. 

\TT{(1)} Since $Q_i(t)=0$, the server updated at least one dispatcher in a previous time slot, \ie for at least one dispatcher $j^*$ we have that $\tilde{Q}_{i}^{j^*}(t)=0$. This must hold since there is a dispatcher that received the update message after this queue got empty (that is, when a server becomes idle, a message is sent $w.p.$ 1). Now consider the event $A_1 = \set{a_i^{j^*}(t)>0 \cap s_i(t)>0}$. Since the tie breaking rule is random, by \eqref{eq:arrival_1}, \eqref{eq:arrival_2}, \eqref{eq:arrival_4}, \eqref{eq:service_1} and \eqref{eq:service_2}, there exists $\bar{\epsilon}_i>0$ such that $\mathbb{P}(A_1)>\bar{\epsilon}_i$. Since $a(t)$ and $s_i(t)$ are not dependent on any system information at the beginning of time slot $t$ we obtain
\begin{equation}\label{eq:jiq_lib_proof_1}
\begin{split}
    \E \Big[\mathbbm{1}_i^j(t) \,\, \Big| \,\, \big|Q_i(t) - \tilde{Q}_i^j(t)\big|, & Q_i(t) = 0 \Big] \ge \cr
    & p \cdot \mathbb{P}(A_1)> p \cdot \bar{\epsilon}_i. 
\end{split}
\end{equation}

\TT{(2)} Since $Q_i(t)>0$, there is a strictly positive probability that a job would be completed at this time slot. That is, since $s_i(t)$ is not dependent on any system information at the beginning of time slot $t$ we obtain
\begin{equation}\label{eq:jiq_lib_proof_2}
\begin{split}
    \E \Big[\mathbbm{1}_i^j(t) \,\, \Big| \,\, \big|Q_i(t) - \tilde{Q}_i^j(t)\big|, & Q_i(t) > 0 \Big] \ge \cr
    & p \cdot P(s_i(t)>0) > p \cdot \bar{\epsilon}_i. 
\end{split}
\end{equation}

Finally, since $\E \Big[\mathbbm{1}_i^j(t) \,\, \Big| \,\, \big|Q_i(t) - \tilde{Q}_i^j(t)\big| \Big]$ is a convex combination of the left hand sides of \eqref{eq:jiq_lib_proof_1} and \eqref{eq:jiq_lib_proof_2} we obtain that
$$\E \Big[\mathbbm{1}_i^j(t) \,\, \Big| \,\, \big|Q_i(t) - \tilde{Q}_i^j(t)\big| \Big]>p \cdot \bar{\epsilon}_i.$$ 
Now fix $\bar{\epsilon} = \min_i \set{\bar{\epsilon}_i}$. We have that
$$\E \Big[\mathbbm{1}_i^j(t) \,\, \Big| \,\, \big|Q_i(t) - \tilde{Q}_i^j(t)\big| \Big]>p \cdot \bar{\epsilon} \quad \forall (i,j,t) \in N \times M \times \mathbb{N}.$$
Again, the probabilistic sufficiency condition holds and this concludes the proof. 
\end{proof}

\subsection{\name{} with smart servers}

We next propose a more advanced \name{} variant that relies on smart servers.
Namely, when the system uses a pull-based communication type, the servers update the dispatchers. As a result, the servers know the dispatcher states and how bad their local views are. This is because a server knows what was the last update message it sent to a dispatcher and how many jobs it has received from it since then.

In terms of additional resource requirements, an array of size $m$ is sufficient in order to keep this information at a server. Namely, each server $i$ holds a single entry for each dispatcher $j$. This entry is updated when server $i$ sends an update message to dispatcher $j$ or when new jobs from dispatcher $j$ arrive to server $i$. 

Specifically, we propose \name{}-$Smart(f,A)$. This \name{} variant is defined via two parameters, as follows. 
\begin{enumerate}
\item A function $f$ used by each server. Namely, $f$ determines the probability by which a server, at each time slot in which it completes at least one job, sends an update message. 
\item An algorithm $A$ used by each server. Namely, $A$ determines which dispatcher will be updated if a message is sent. 
\end{enumerate}
Both $f$ and $A$ may depend on the server's identity and its state. The pseudo-code for \name{}-$Smart(f,A)$ server update messages is presented by Algorithm \ref{alg:lib_smart}. The remainder of the dispatcher's code is identical to Algorithm \ref{alg:lib_jiq}. 

\begin{algorithm}[!t]
\small
\caption{\name{}-$Smart(f,A)$ (smart servers)}
\label{alg:generator3}
\SetKwProg{generate}{}{}{end}
\uline{Code for server $i \in N$:}\\
\smallskip
\generate{\textbf{Send update message:}}{
    \ForEach{time slot $t$}{
        \uIf{completed jobs at time slot $t$}{
         with probability $f$:\\ 
         \quad\quad send $\langle i , Q_i(t) \rangle$ to dispatcher dictated by $A$\;
        }
    }
}
\label{alg:lib_smart}
\end{algorithm}

For concreteness, we next choose specific $f$ and $A$ and prove that they result in a stable \name{} policy. Note that we choose $f$ and $A$ heuristically and not via optimization. We leave further investigation regarding smart servers to future work.

\smallskip
\TT{Load dependent updates.} For server $i$ and dispatcher $j$ denote
$$Z_i(t) = \max_j \{ \big|Q_i(t) - \tilde{Q}_i^j(t)\big| \}.$$
Our heuristic choice for $f$ is given by 
$$f(i,p,t) = 
\begin{cases}
    p \quad Z_i(t) < Q_i(t) \\
    1 \quad Z_i(t) \ge Q_i(t),
\end{cases}
$$
where $p \in (0,1]$ is a fixed constant. In other words, at each time slot in which server $i$ completes at least one job, it decides whether to update a dispatcher: (1) the server only updates a dispatcher w.p. (with probability) $p$ whenever all local-state errors at dispatchers are relatively small, \ie strictly smaller than the server queue state; (2) Else, it sends an update w.p. 1. 

For instance, if at the beginning of a time slot, the server queue size is 10 and a dispatcher thinks that it is 0 or 20 (\ie its local error is not small), then the server necessarily sends an update whenever it completes a job at that time slot. Another example is if the server \emph{becomes idle}, and therefore the error cannot be smaller than the queue size of 0, hence the server also sends an update w.p. 1. 

\smallskip
\TT{Worst approximation first (\mbox{WAF}).} We set $A$ such that a server always updates the dispatcher that has the worst local view when the message is sent. That is, if server $i$ sends an update message at time slot $t$, it is sent to dispatcher ${j^* \in argmax_j\{\big|Q_i(t) - \tilde{Q}_i^j(t)\big|\}}$. Ties are broken arbitrarily. 

For ease of exposition, we abuse notation and denote the function $f^*(p) = \{f(i,p,t)\}_{i \in N, t \in \mathbbm{N}}$. We now turn to prove that using \name{}-$Smart(f^*(p),\mbox{WAF})$ results in strong stability for any $p>0$.

\begin{proposition}\label{prop:3}
Assume that the system is sub-critical and uses \name{}-$Smart(f^*(p),\mbox{WAF})$. Then, the system is strongly stable.
\end{proposition}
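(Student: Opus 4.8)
The plan is to verify Assumption~\ref{asm_2} and then invoke Theorem~\ref{thm:1}. Because under $\mbox{WAF}$ only one dispatcher is refreshed per time slot, the per-dispatcher condition of Theorem~\ref{thm:simple_cond} is not directly available; instead I would bound, for each server $i$, the \emph{aggregate} error $S_i(t):=\sum_{k=1}^{m}\bigl|Q_i(t)-\tilde Q_i^k(t)\bigr|$, which dominates $\bigl|Q_i(t)-\tilde Q_i^j(t)\bigr|$ for every $j$. Fix $i$ and let $\mathcal{F}_t$ be the history up to the start of slot $t$. Using $|[x]^+-y|\le|x-y|$ for $y\ge0$, the queue recursion~\eqref{eq:workload_dynamics}, and the fact that each dispatcher increments only its own local entry, any dispatcher $k$ that is \emph{not} refreshed by server $i$ at slot $t$ satisfies $|Q_i(t+1)-\tilde Q_i^k(t+1)|\le|Q_i(t)-\tilde Q_i^k(t)|+a(t)+s_i(t)$. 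Since $\mbox{WAF}$ sends its message (call this event $U_i(t)$) to $j^*(t)\in\arg\max_k|Q_i(t)-\tilde Q_i^k(t)|$ and resets that entry to the true value, summing over the $m$ dispatchers gives, with $Z_i(t):=\max_k|Q_i(t)-\tilde Q_i^k(t)|\ge S_i(t)/m$,
\[
S_i(t+1)\le S_i(t)-\mathbbm{1}[U_i(t)]\,Z_i(t)+m\,(a(t)+s_i(t)).
\]
Taking $\E[\,\cdot\mid\mathcal{F}_t]$, using that $a(t),s_i(t)$ are independent of $\mathcal{F}_t$ with means $\lambda^{(1)},\mu_i^{(1)}$ while $Z_i(t)$ is $\mathcal{F}_t$-measurable, yields
\[
\E[S_i(t+1)\mid\mathcal{F}_t]\le S_i(t)-Z_i(t)\,\mathbb{P}(U_i(t)\mid\mathcal{F}_t)+m\,(\lambda^{(1)}+\mu_i^{(1)}).
\]

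The crux is a uniform lower bound $\mathbb{P}(U_i(t)\mid\mathcal{F}_t)\ge\gamma_i>0$ whenever $S_i(t)>0$ (if $S_i(t)=0$ the relevant term vanishes). Conditioned on server $i$ completing at least one job at slot $t$, it sends a message with probability at least $p$ (since $f\ge p$), so it suffices to lower-bound the probability that server $i$ completes a job. If $Q_i(t)\ge1$ this is at least $\mathbb{P}(s_i(t)\ge1)>0$. The delicate situation is $Q_i(t)=0$ with $\tilde Q_i^k(t)>0$ for some $k$. I would first show that then at least one dispatcher holds the exact value: letting $\tau<t$ be the last slot at which server $i$ completed a job, a short monotonicity argument shows the queue is empty throughout $\{\tau+1,\dots,t\}$, so at the end of slot $\tau$ the server is idle, the load-dependent rule forces $f=1$, and $\mbox{WAF}$ therefore sends a correcting message (with probability $1$) to the worst-off dispatcher $k_0$; moreover $\tilde Q_i^{k_0}$ cannot have grown since, because $k_0$ routing to $i$ in $\{\tau+1,\dots,t-1\}$ would have made the queue positive. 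Given such a $k_0$ with $\tilde Q_i^{k_0}(t)=0$, server $i$ receives and completes a fresh job whenever $a^{k_0}(t)>0$, $k_0$'s random tie-break selects $i$ (among at most $n$ minimizers), and $s_i(t)\ge1$ — an event of probability at least $\epsilon_0\cdot\tfrac1n\cdot\mathbb{P}(s_i(t)\ge1)$ by~\eqref{eq:arrival_4}, random tie-breaking, and the independence of the arrival and service processes from the state. Hence $\gamma_i:=p\,\epsilon_0\,\mathbb{P}(s_i(t)\ge1)/n$ works. (The probability-$\to0$ event that server $i$ has not served any job by time $t$ is handled by a separate, easy bound: there the local entries stay frozen near their time-$0$ values, whose expected errors are bounded by $C_0$ via~\eqref{eq:time_0_approx}.)

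Plugging the bound into the drift and using $Z_i(t)\ge S_i(t)/m$ gives $\E[S_i(t+1)\mid\mathcal{F}_t]\le(1-\gamma_i/m)\,S_i(t)+m(\lambda^{(1)}+\mu_i^{(1)})$; taking full expectations, noting $\E[S_i(0)]\le mC_0$, and applying Lemma~\ref{lem:gap_reccurence} yields $\E[S_i(t)]\le\max\{m^2(\lambda^{(1)}+\mu_i^{(1)})/\gamma_i,\;mC_0\}$ for all $t$. Hence $\E[|Q_i(t)-\tilde Q_i^j(t)|]\le\E[S_i(t)]$ is uniformly bounded; taking the maximum over $i$ supplies the constant $C$ of Assumption~\ref{asm_2}, and Theorem~\ref{thm:1} delivers strong stability. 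The main obstacle in this plan is exactly the idle-server case: showing that an empty server whose view is mistaken at some dispatcher is nevertheless refreshed with probability bounded away from zero. This is where the two design choices are essential — the $f=1$-when-idle clause guarantees that whenever the server empties it corrects its worst observer, and the $\mbox{WAF}$ targeting keeps at least one dispatcher exact so that a new job can flow in and re-trigger the update mechanism. Beyond this, the argument reuses Lemma~\ref{lem:gap_reccurence} exactly as in the simplified-condition proofs and then closes via Theorem~\ref{thm:1}.
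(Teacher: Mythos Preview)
Your proposal is correct and follows the same strategy as the paper: since $\mbox{WAF}$ precludes a uniform per-dispatcher update bound, work instead with the aggregate $S_i(t)=\sum_k|Q_i(t)-\tilde Q_i^k(t)|$, use that $\mbox{WAF}$ removes the largest summand to obtain a contraction, lower-bound the update probability via the idle/non-idle case split of Proposition~\ref{prop:2}, and close with Lemma~\ref{lem:gap_reccurence} and Theorem~\ref{thm:1}. The only cosmetic difference is that where you invoke $Z_i(t)\ge S_i(t)/m$ directly, the paper instead couples $\mbox{WAF}$ to a symmetrized indicator family $\{\mathbbm{1}_i^{*,j}\}$ with equal marginals over $j$ so as to reuse the per-$j$ bound from Proposition~\ref{prop:2}; your idle-case monotonicity argument is in fact more detailed than the paper's, which simply defers to that proposition.
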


\begin{proof} 

We already saw in the proof of Theorem \ref{thm:simple_cond} that, if 
\begin{equation}
    \E \Big[\mathbbm{1}_i^j(t) \,\, \Big| \,\, \big|Q_i(t) - \tilde{Q}_i^j(t)\big| \Big] > \bar{\epsilon} \quad \forall (i,j,t) \in N \times M \times \mathbb{N}.
\end{equation}
then Assumption \ref{asm_2} holds and we have strong stability. However, when using \name{}-$Smart(f^*(p),\mbox{WAF})$, such a condition may fail to hold. Indeed, when a server sends an update message, it is sent to the dispatcher that holds the worst local view at that time slot, thus leaving other dispatchers, with possible better local views, with a probability of $0$ for an update. Therefore, we simply show that updating the worst local view is at least as good as updating any dispatcher that is chosen randomly. We do so by examining the \emph{sum of errors} over the dispatcher for a specific server $i$. Fix server $i$ and denote
$$Z_j(t) = \big|Q_i(t) - \tilde{Q}_i^j(t)\big|.$$ 
Now, for all $t$ it, holds that
\begin{equation}\label{eq:z_t_smart}
\begin{split}
&\sum_jZ_j(t+1) \le \cr 
&\sum_j\big(1-\mathbbm{1}_i^{j}(t)\big) \cdot \Big(Z_j(t) + a_i(t) + s_i(t)\Big) \le \cr   
&\sum_j\Big(\big(1-\mathbbm{1}_i^{j}(t)\big) \cdot Z_j(t) + a(t) + s_i(t)\Big) \le \cr
&\sum_j\Big(\big(1-\mathbbm{1}_i^{*,j}(t)\big) \cdot Z_j(t) + a(t) + s_i(t)\Big),
\end{split}
\end{equation}
where 
$$\sum_j \mathbbm{1}_i^{*,j}(t) = \sum_j \mathbbm{1}_i^{j}(t),$$ 
and 
$$\mathbbm{P}(\mathbbm{1}_i^{*,j_1}(t)=1)=\mathbbm{P}(\mathbbm{1}_i^{*,j_2}(t)=1) \quad\forall j_1,j_2 \in M.$$
Now, taking expectation of \eqref{eq:z_t_smart} and using its linearity, yields:
\begin{equation}\label{eq:z_t_exp_smart}
\begin{split}
    & \sum_j\E [Z_j(t+1)] \le \cr 
    & \sum_j\E[(1-\mathbbm{1}_i^{*,j}(t)\big) \cdot Z_j(t)] + m\lambda^{(1)} + m\mu_i^{(1)}.
\end{split}
\end{equation}
Next, using the law of total expectation, yields:
\begin{equation}\label{eq:z_t_exp_2_smart}
\begin{split}
  &\sum_j\E[(1-\mathbbm{1}_i^{*,j}(t)\big) \cdot Z_j(t)] = \cr 
  &\E\Big[\sum_j\E\big[(1-\mathbbm{1}_i^{*,j}(t)\big) \cdot Z_j(t) \,\big|\, Z_i(t)\big]\Big] =\cr
  & \E \Big[ \sum_j Z_j(t) \cdot \E\big[(1-\mathbbm{1}_i^{*,j}(t)\big) \,\big|\, Z_j(t)\big]\Big].
\end{split}
\end{equation}
Now, since $p > 0$, as in the proof of Proposition \ref{prop:2}, it immediately follows that there exists a strictly positive $\bar{\epsilon}$ such that $\E\big[\mathbbm{1}_i^{*,j}(t) \,\big|\, Z_j(t)\big] \ge p \cdot \bar{\epsilon}$. This means that, for all $j$
\begin{equation}\label{eq:z_t_exp_2_smart_2}
\begin{split}
  &\E[(1-\mathbbm{1}_i^{*,j}(t)\big) \cdot Z_j(t)] \le (1-p \cdot \bar{\epsilon})\E[Z_j(t)].
\end{split}
\end{equation}
Finally, using \eqref{eq:z_t_exp_2_smart_2} in \eqref{eq:z_t_exp_smart}, yields: 
\begin{equation}
\begin{split}
    & \sum_j\E [Z_j(t+1)] \le \cr & (1-p \cdot\bar{\epsilon})\sum_j\E[Z_j(t)] + m\lambda^{(1)} + m\mu_i^{(1)}.
\end{split}
\end{equation}
By denoting $Z(t) = \sum_j\E[Z_j(t)]$ we obtain 
\begin{equation}\label{eq:sum_z_subtit}
    Z(t+1) \le (1-p \cdot\bar{\epsilon})Z(t) + m\lambda^{(1)} + m\mu_i^{(1)}.
\end{equation}
Now using Lemma \ref{lem:gap_reccurence} in \eqref{eq:sum_z_subtit} yields
$$Z(t) \le \max \set{\frac{m\lambda^{(1)} + m\mu_i^{(1)}}{p \cdot\bar{\epsilon}},m \cdot C_0} \,\, \forall \, t.$$
Finally, since Assumption \ref{asm_2} holds for $Z(t)$, then it trivially holds for each of its positive components, \ie, $Z_j(t)$ for all $j \in M$. This concludes the proof.
\end{proof}


\section{Evaluation}

\T{Algorithms.} We proceed to present an evaluation study of three stable \name{} schemes, namely: \name{}-$Sample(2)$, \name{}-$Update(2m/n)$ and \name{}-$Smart(f^*(2m/n),\mbox{WAF})$. We compare them to the baseline full-information $JSQ$ and to the low-communication $JSQ(2)$ and $JIQ$. We note that all our three \name{} schemes are configured to have roughly the same expected communication overhead as the scalable $JSQ(2)$. Table \ref{tab:comp} summarizes the stability properties and the worst-case communication requirements of the evaluated load balancing techniques as established by our analysis and verified by our evaluations. 

\begin{table}[!t]
\centering  \scriptsize  
\begin{tabular}
{@{}lC{\mycc}C{\mycc}C{\mycc}C{\mycc}C{\mycc}C{\mycc}@{}} 
\toprule
                   & \multicolumn{2}{c}{Throughput opt.}  & \multicolumn{2}{c}{Comm. overhead} \\ \cmidrule{2-3} \cmidrule{4-5} 
                & \footnotesize{Homogeneous}       & \footnotesize{Heterogeneous}         & \footnotesize{Per time slot}  & \footnotesize{Per job arrival} \\  
\midrule
$JSQ$                                               & \checkmark  & \checkmark   &  $m \cdot n$     & $m$    \\ 
$JSQ(d)$                                            & \checkmark  & $\times$     &  $d \cdot m$     & $d$    \\ 
$JIQ$                                               & \checkmark  & $\times$     &  $n$             & $1$    \\ 
\midrule
\name{}-$Sample(d)$                                 & \checkmark  & \checkmark   &  $d \cdot m$     & $d$    \\ 
\name{}-$Update(p)$                      & \checkmark  & \checkmark   &  $n$             & $1$    \\ 
\name{}-$Update(f^*(p),\mbox{WAF})$      & \checkmark  & \checkmark   &  $n$             & $1$    \\ 
\bottomrule 
\end{tabular}
\caption{Comparing stability and \emph{worst-case} communication overhead of the evaluated load balancing techniques.} 
\label{tab:comp}
\vspace{-4mm}
\end{table}

\T{System.} In all our experiments, we consider a system of 100 servers and 10 dispatchers. \new{Recall that the system operates in time slots with the following order of events within each time slot: (1)~jobs arrive at each dispatcher; (2)~a routing decision is taken by each dispatcher and it immediately forwards its jobs to one of the servers; (3)~each server performs its service for this time-slot.}

\T{Arrivals.} \new{The number of jobs that arrive at each dispatcher at each time slot is sampled from a Poisson distribution with parameter $0.1 \cdot \lambda$, hence the total number of arrivals to the system at each time slot is a sample from a Poisson distribution with parameter $\lambda$. Each of the $10$ dispatchers does not store incoming jobs, and instead immediately forwards them to one of the $100$ servers.}

\T{Departures.} \new{Each server has an unbounded FIFO queue for storing incoming jobs.} We divide the servers into the following two groups: (1) weak and (2) strong. We then consider different mixes of their numbers: (1) 10\% strong - 90\% weak; (2) 50\% strong - 50\% weak; (3) 90\% strong - 10\% weak. We also consider two different cases of heterogeneity (that is, service rate ratio between a strong and a weak server): (1) moderate heterogeneity with a 1:2 service rate ratio; (2) high heterogeneity with a 1:10 service rate ratio. 


\subsection{Moderate heterogeneity}

\begin{figure*}[t!]
\centering
\begin{subfigure}{0.30\linewidth}
\includegraphics[width=\textwidth]{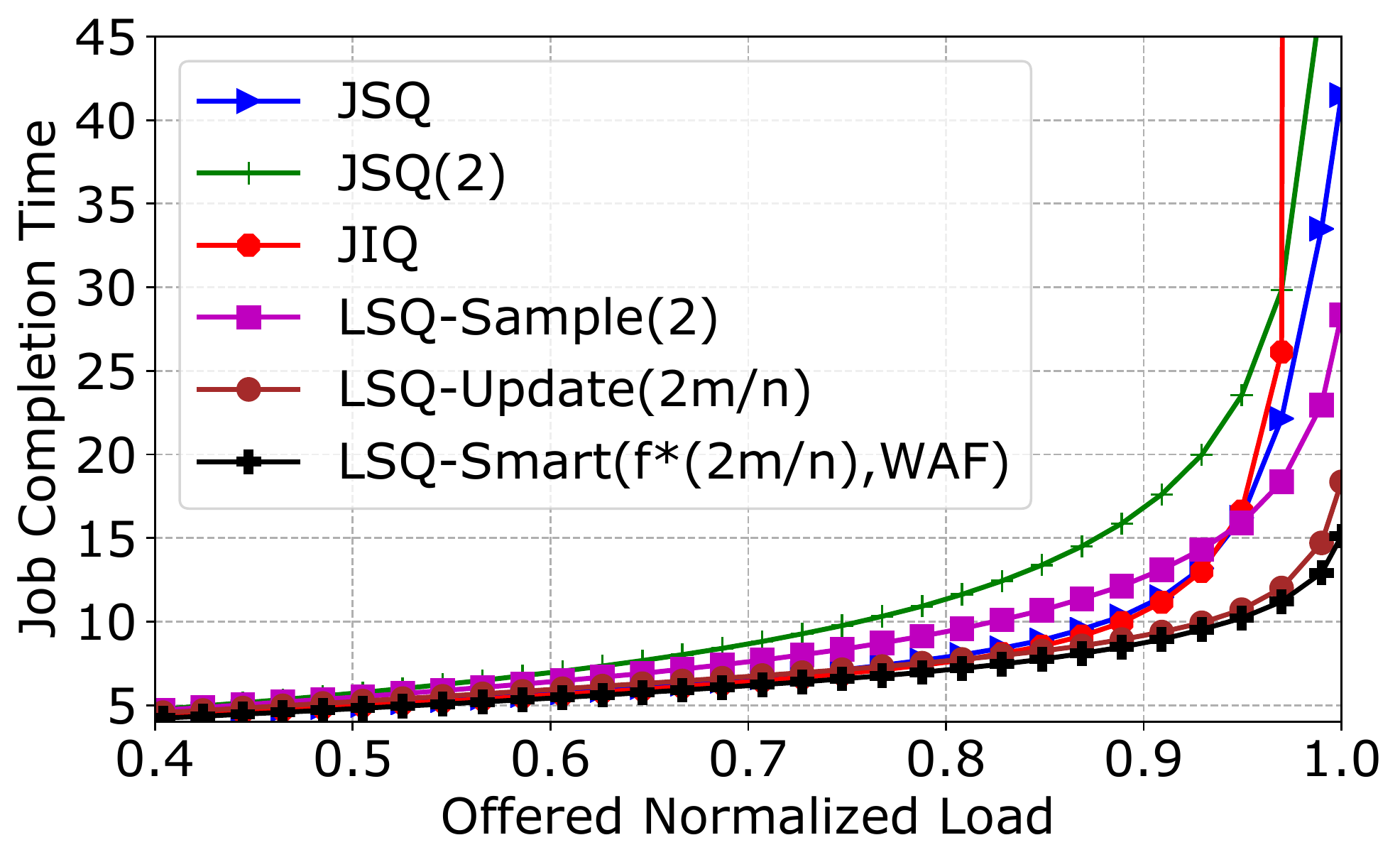}
  \caption{10 strong servers, 90 weak servers.}
  \label{fig:mod_het:jct:01}
\end{subfigure}
\begin{subfigure}{0.30\linewidth}
  \includegraphics[width=\textwidth]{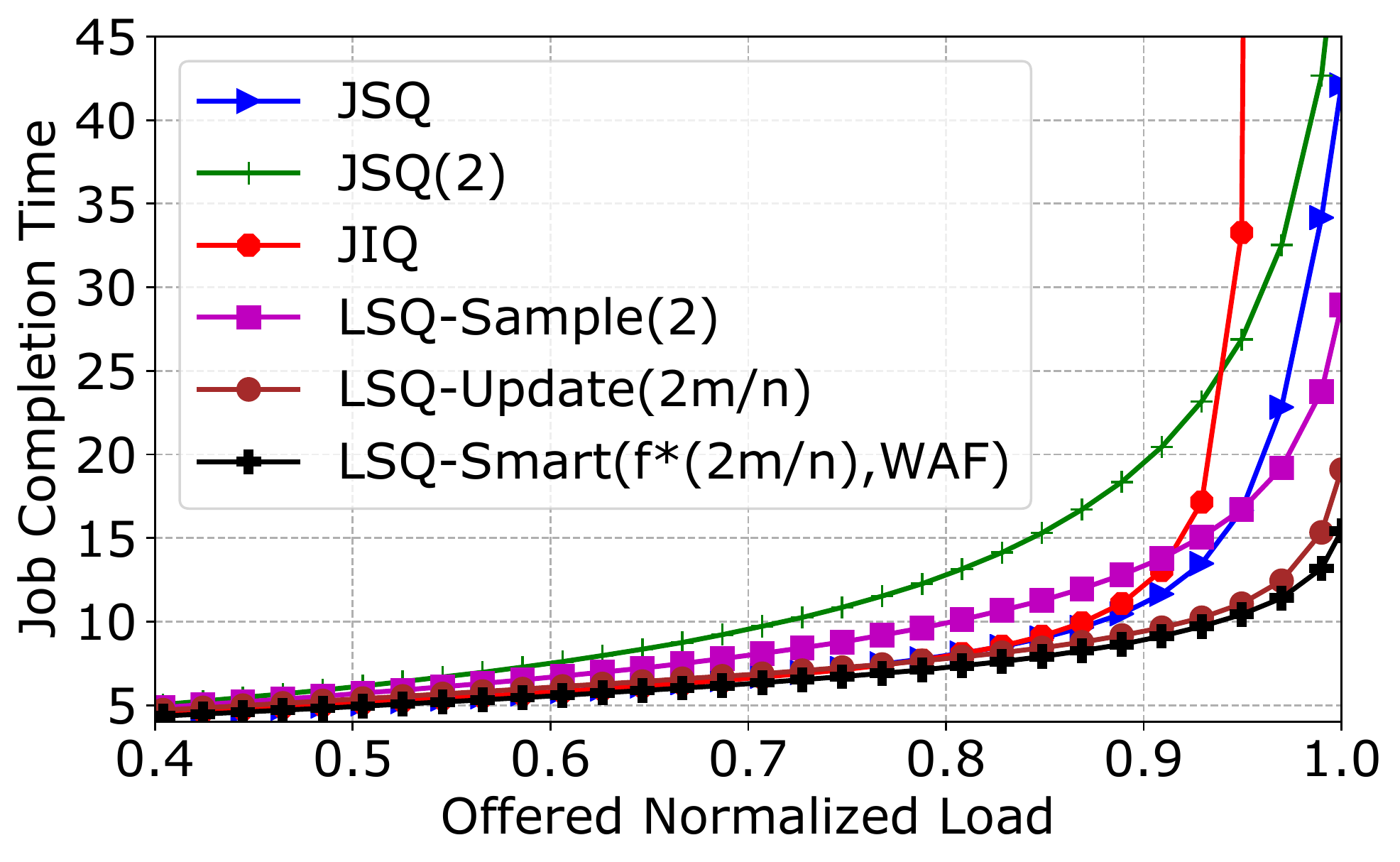}
  \caption{50 strong servers, 50 weak servers.}
  \label{fig:mod_het:jct:05}
\end{subfigure}
\begin{subfigure}{0.30\linewidth}
  \includegraphics[width=\textwidth]{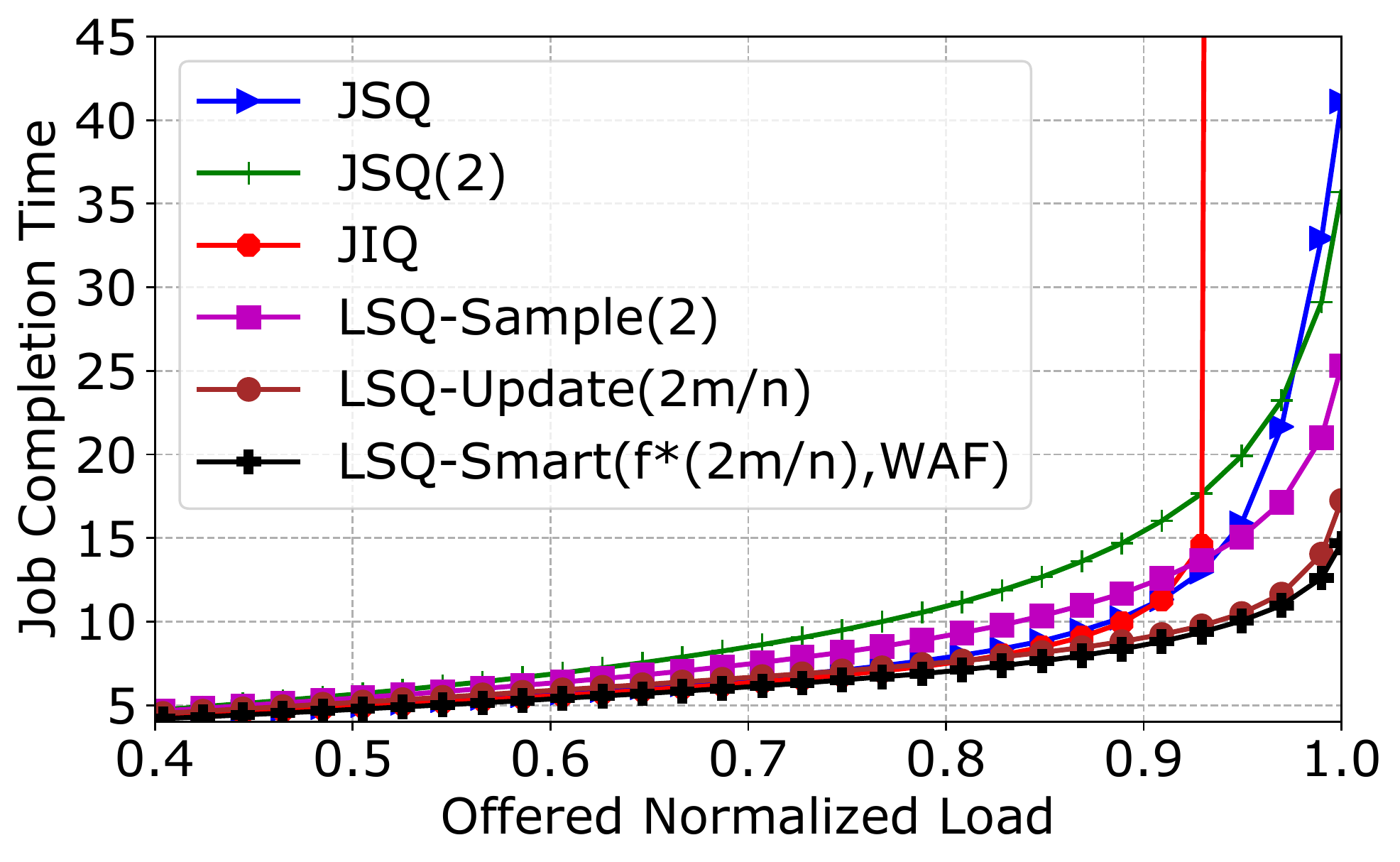}
  \caption{90 strong servers, 10 weak servers.}
  \label{fig:mod_het:jct:09}
\end{subfigure}
\begin{subfigure}{0.30\linewidth}
\includegraphics[width=\textwidth]{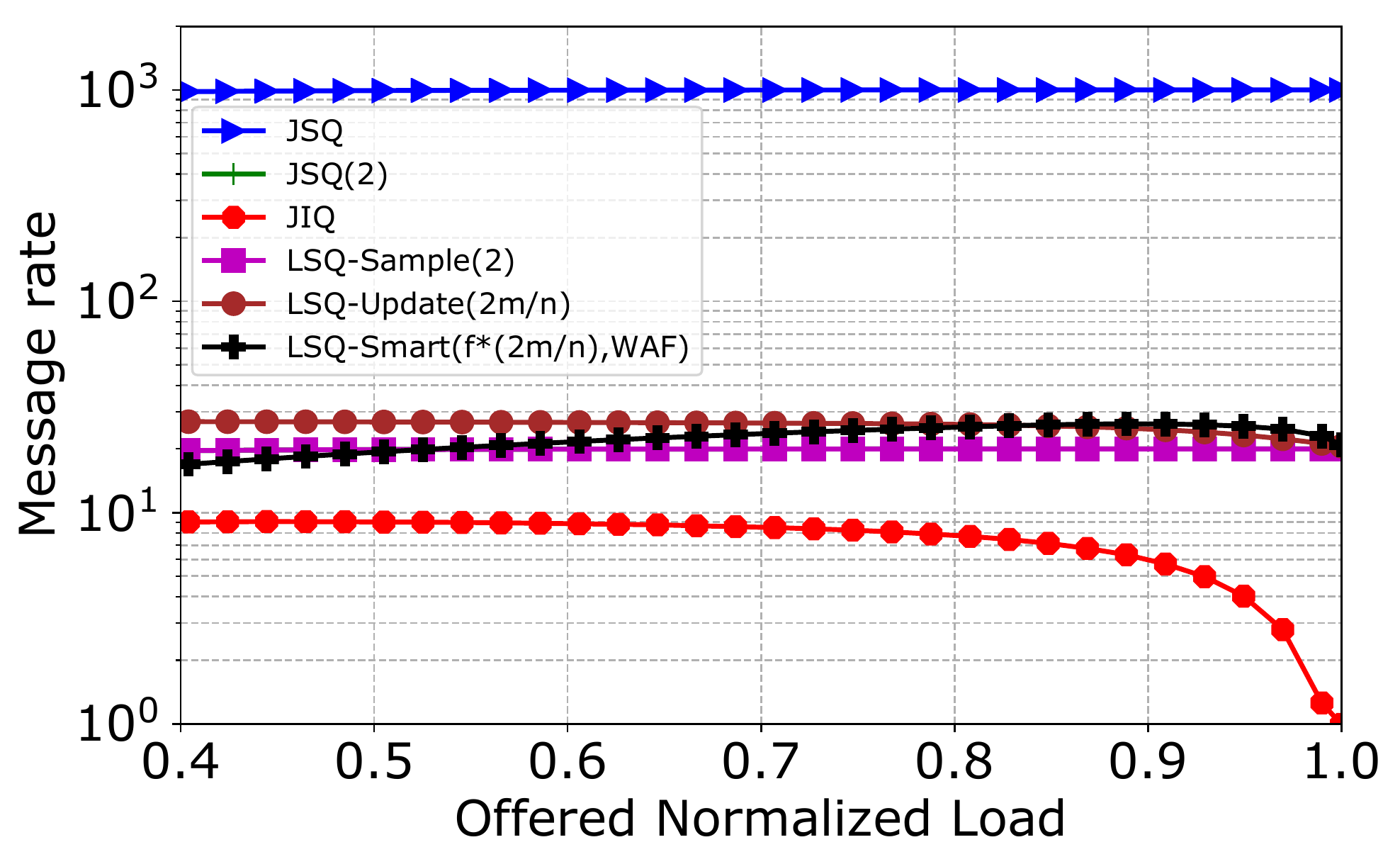}
  \caption{10 strong servers, 90 weak servers.}
  \label{fig:mod_het:mess:01}
\end{subfigure}
\begin{subfigure}{0.30\linewidth}
  \includegraphics[width=\textwidth]{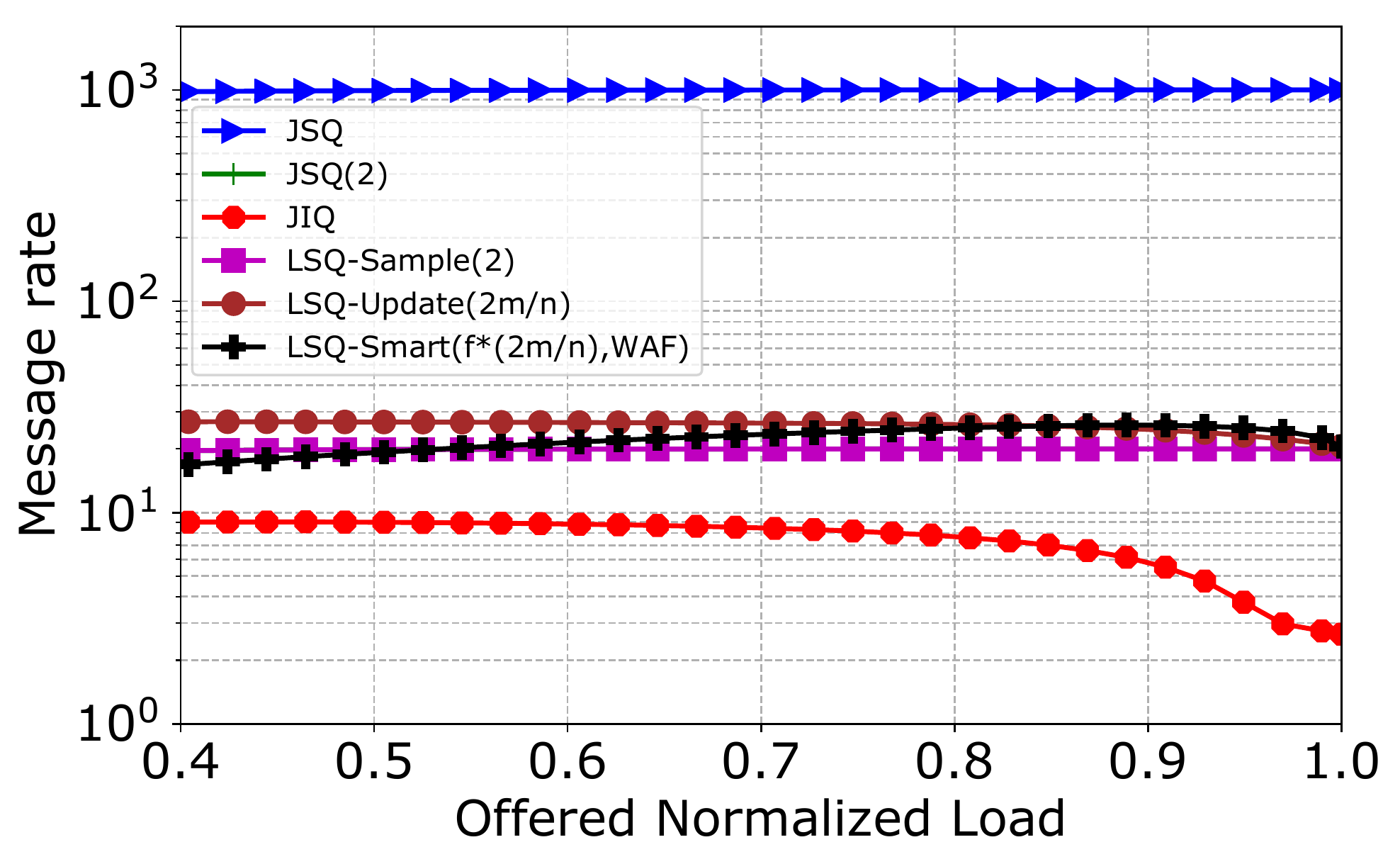}
  \caption{50 strong servers, 50 weak servers.}
  \label{fig:mod_het:mess:05}
\end{subfigure}
\begin{subfigure}{0.30\linewidth}
  \includegraphics[width=\textwidth]{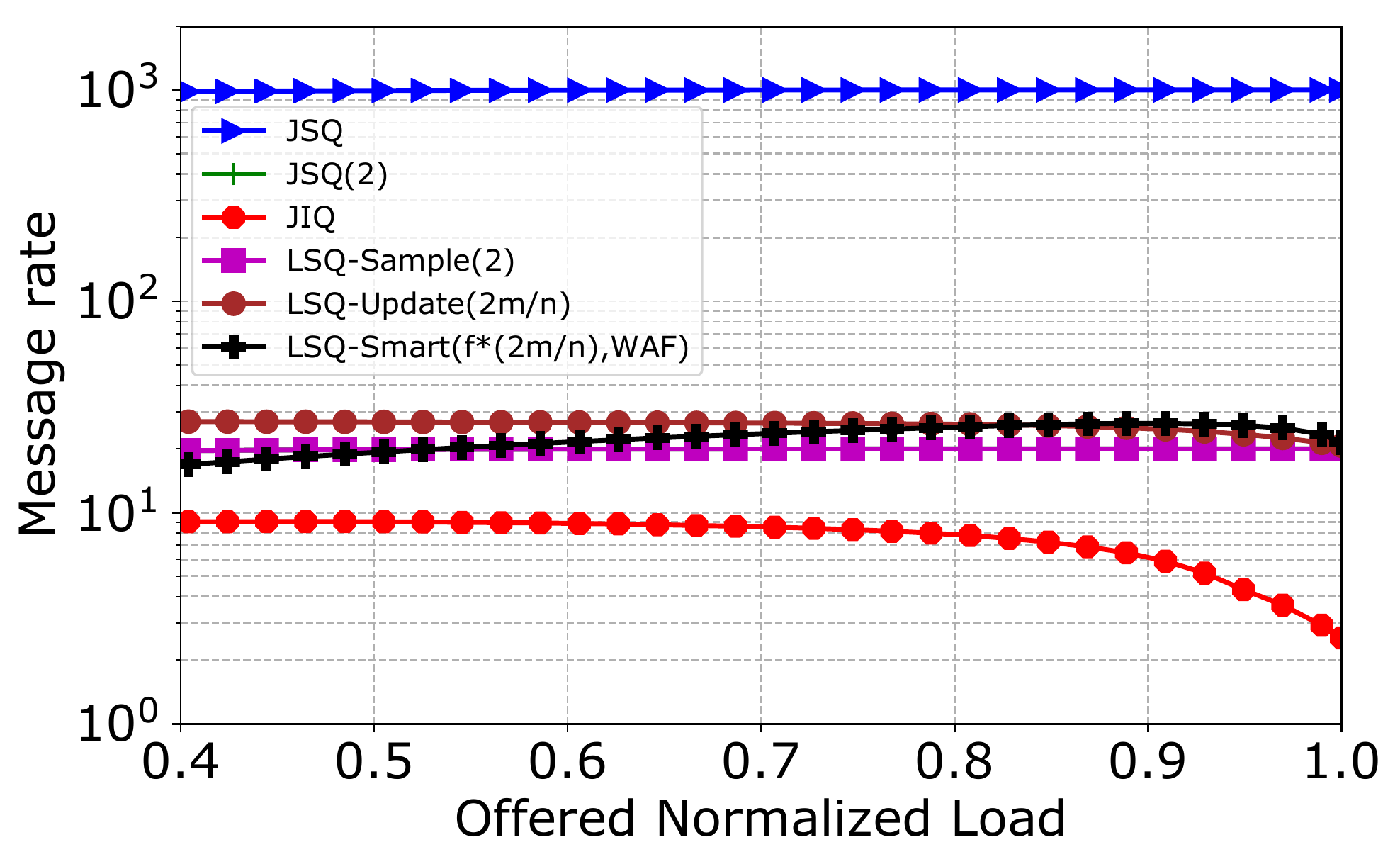}
  \caption{90 strong servers, 10 weak server.}
  \label{fig:mod_het:mess:09}
\end{subfigure}
\caption{Moderate heterogeneity scenario with 10 dispatchers and 100 heterogeneous servers.}
\label{fig:mod_het}
\end{figure*}

We start with a moderate degree of heterogeneity. Specifically, the service processes are geometrically distributed with a parameter $2p$ for a weak server and a parameter $p$ for a strong server. \new{That is, at each time slot, the number of jobs that may be completed by each server is sampled from a geometric distribution with its respective parameter}. In a simulation with $n_s$ strong servers and $n_w$ weak servers, we set $p = \frac{n_s + 0.5n_w}{100}$ and sweep $0 \le \lambda < 100$. The results are presented in Figure \ref{fig:mod_het}.

\T{Stability.} It should be noted that, in all three scenarios, $JIQ$ is not stable whereas all other algorithms are. Also, $JIQ$'s stability region decreases as the number of strong servers in the mix increases. Intuitively, $JSQ(2)$ is stable in these scenarios because a strong server has to receive 2$\times$ more jobs than a weak server at high loads in order to obtain stability. Since in $JSQ(2)$ a dispatcher samples 2 servers at each arrival, the probability in which a strong server is sampled is sufficiently high (\eg 0.75 for the 50\%-50\% case in Figure \ref{fig:mod_het:jct:05}). This, in turn, is sufficient to obtain stability. As implied by mathematical analysis, $JSQ$ and all three $LSQ$ variant are stable.

\T{Performance.} In all three scenarios and over all loads, our pull-based schemes exhibit the best performance, whereas at high loads our push-based scheme outperforms $JSQ$ as well. This is because $JSQ$ suffers from an increasing incast effect as the load increases. That is, at high loads, there is usually only a single shortest queue that receives all incoming work for that time slot. At low loads, on the other hand, there are usually several idle servers that are randomly picked by the dispatchers, hence the incast is less significant. The performance of $JSQ(2)$ is significantly worse than our policies at low and moderate loads, whereas at high loads it is inconsistent and degrades as the number of strong servers in the mix decreases. 

\T{Communication overhead.} As expected, all our $LSQ$ schemes incur roughly the same communication overhead as the scalable $JSQ(2)$ policy. Remarkably, this is two orders of magnitude less than $JSQ$. Even better communication overhead is achieved by $JIQ$ but, as mentioned, it is not a stable policy.  


\subsection{High heterogeneity}

\begin{figure*}[t!]
\centering
\begin{subfigure}{0.30\linewidth}
\includegraphics[width=\textwidth]{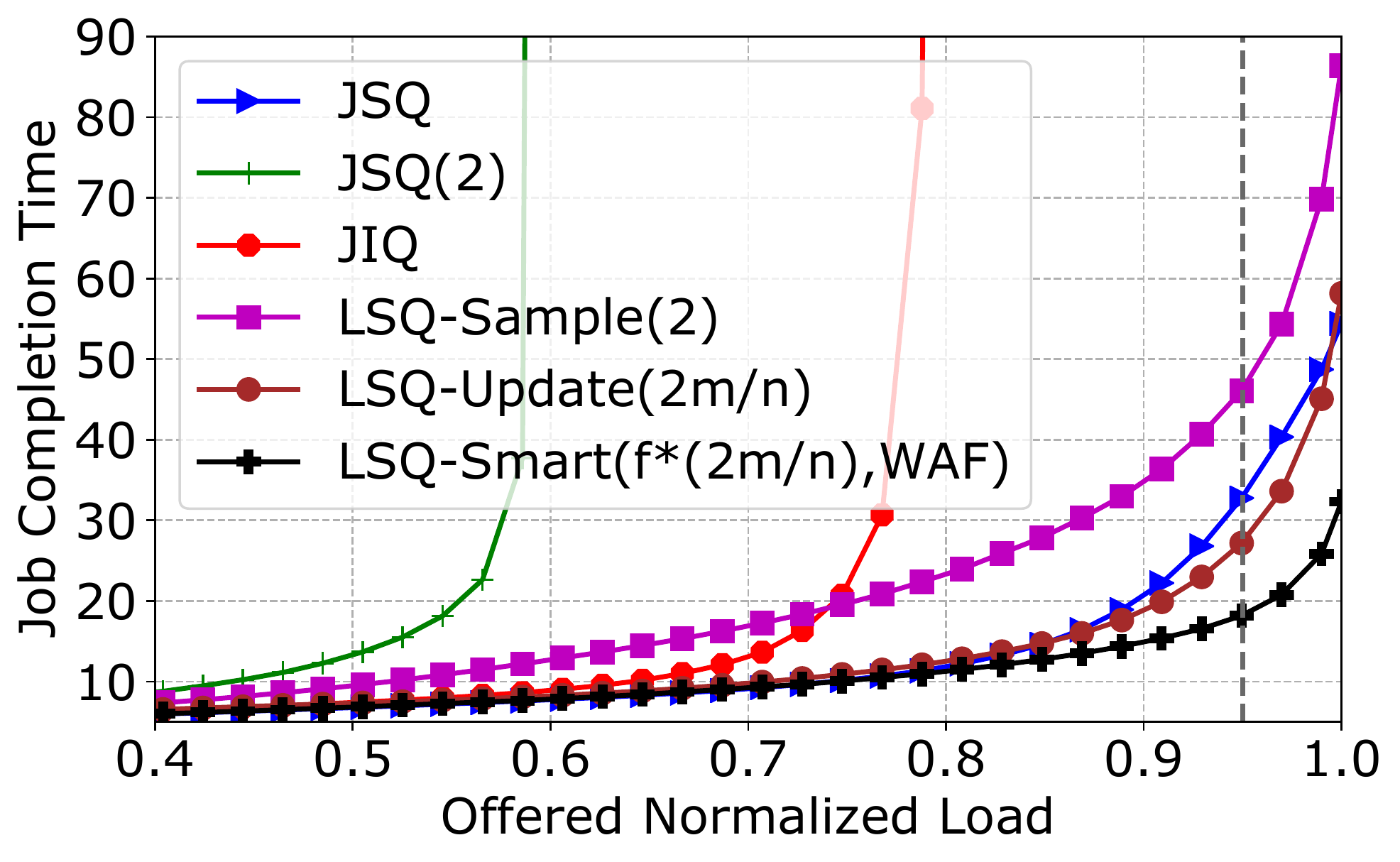}
  \caption{10 strong servers, 90 weak servers.}
  \label{fig:high_het:jct:01}
\end{subfigure}
\begin{subfigure}{0.30\linewidth}
  \includegraphics[width=\textwidth]{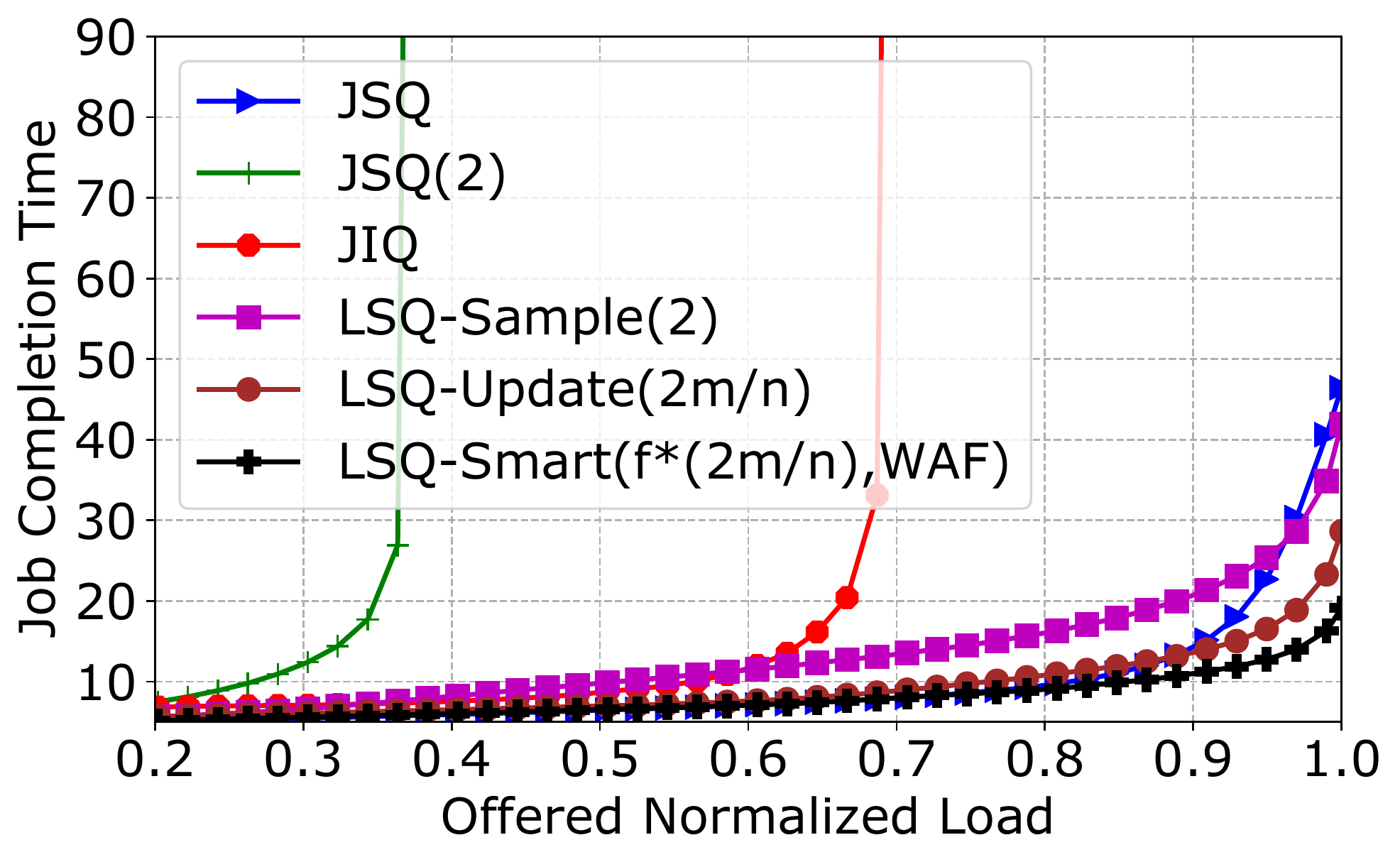}
  \caption{50 strong servers, 50 weak servers.}
  \label{fig:high_het:jct:05}
\end{subfigure}
\begin{subfigure}{0.30\linewidth}
  \includegraphics[width=\textwidth]{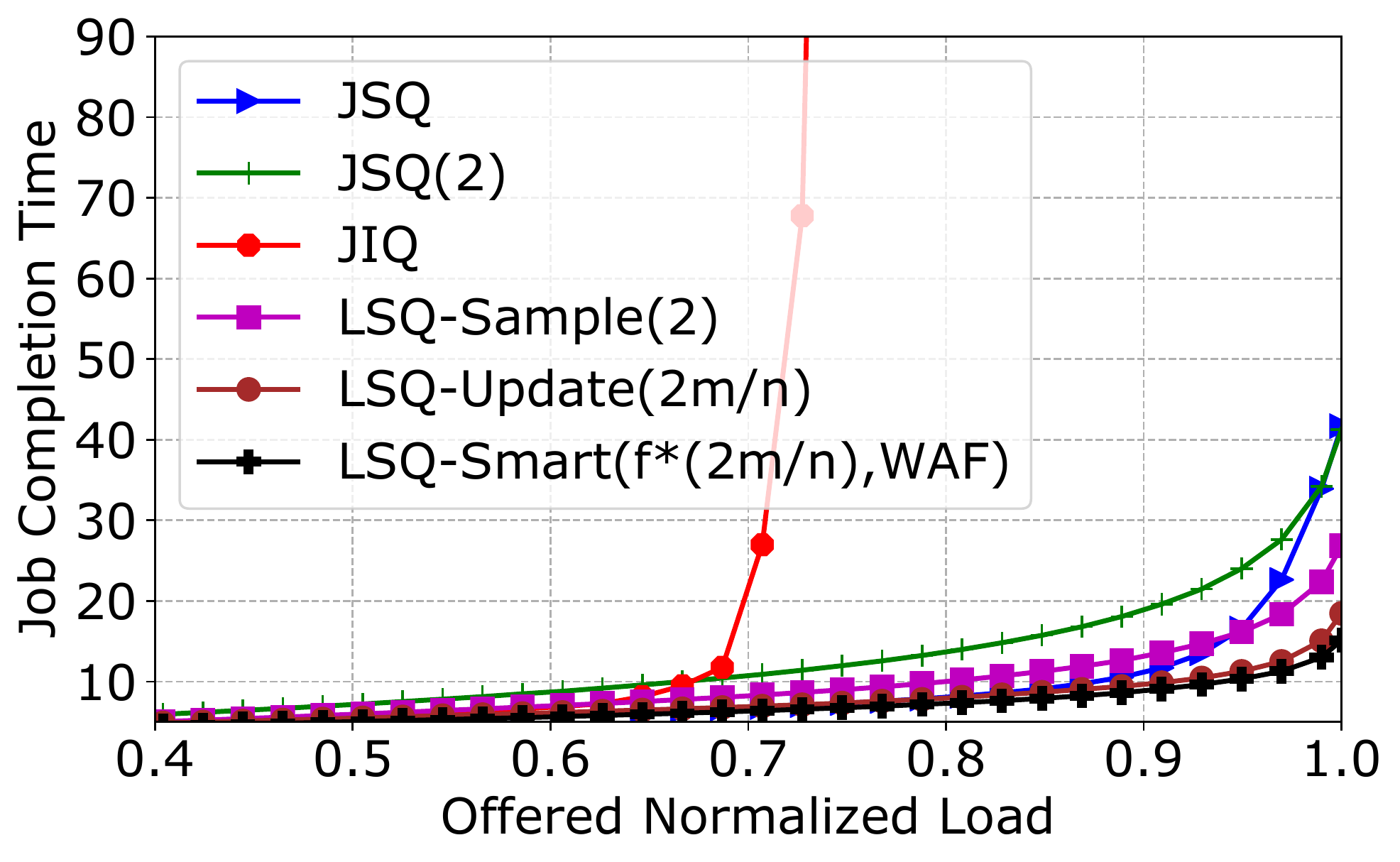}
  \caption{90 strong servers, 10 weak servers.}
  \label{fig:high_het:jct:09}
\end{subfigure}
\begin{subfigure}{0.30\linewidth}
\includegraphics[width=\textwidth]{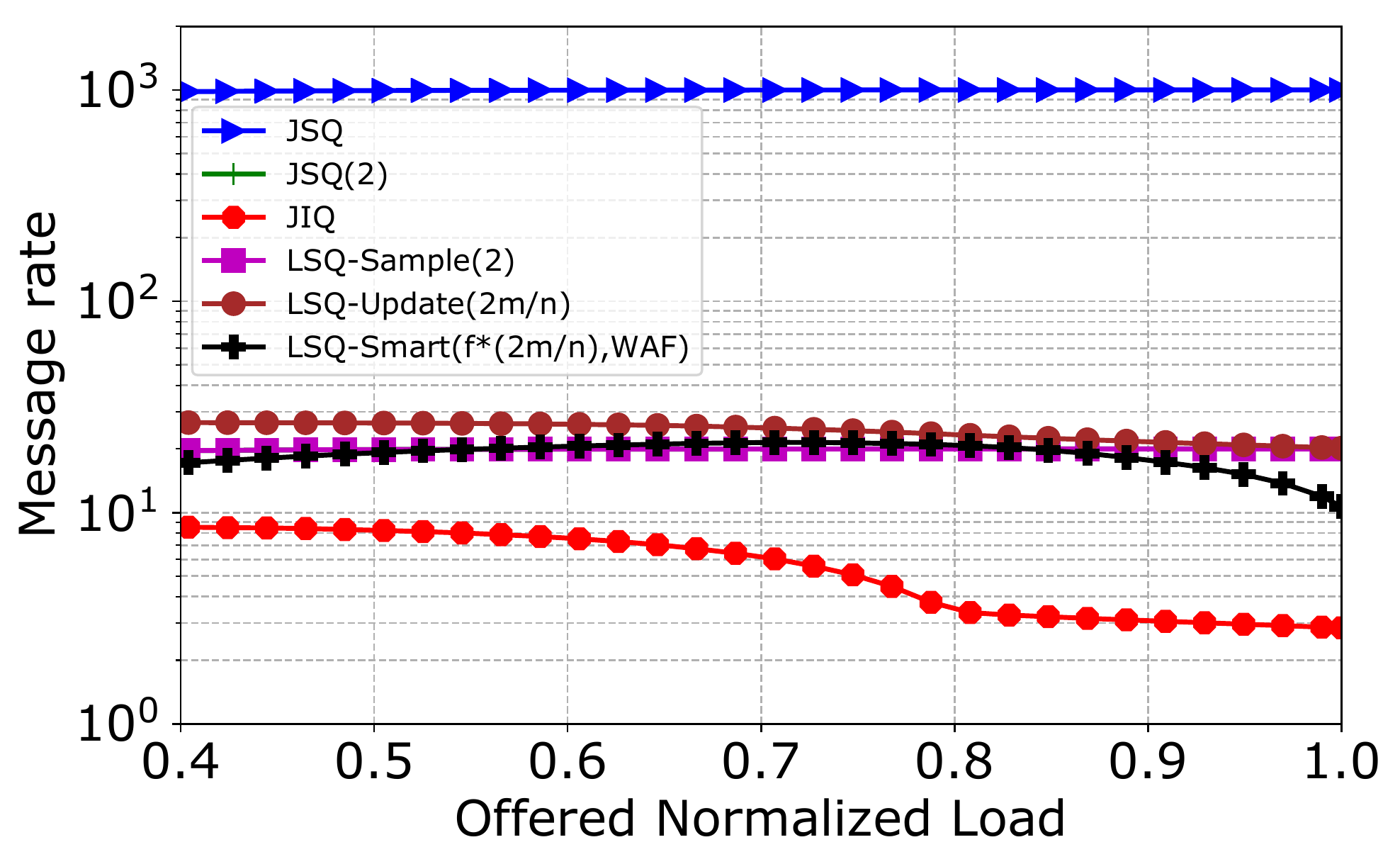}
  \caption{10 strong servers, 90 weak servers.}
  \label{fig:high_het:mess:01}
\end{subfigure}
\begin{subfigure}{0.30\linewidth}
  \includegraphics[width=\textwidth]{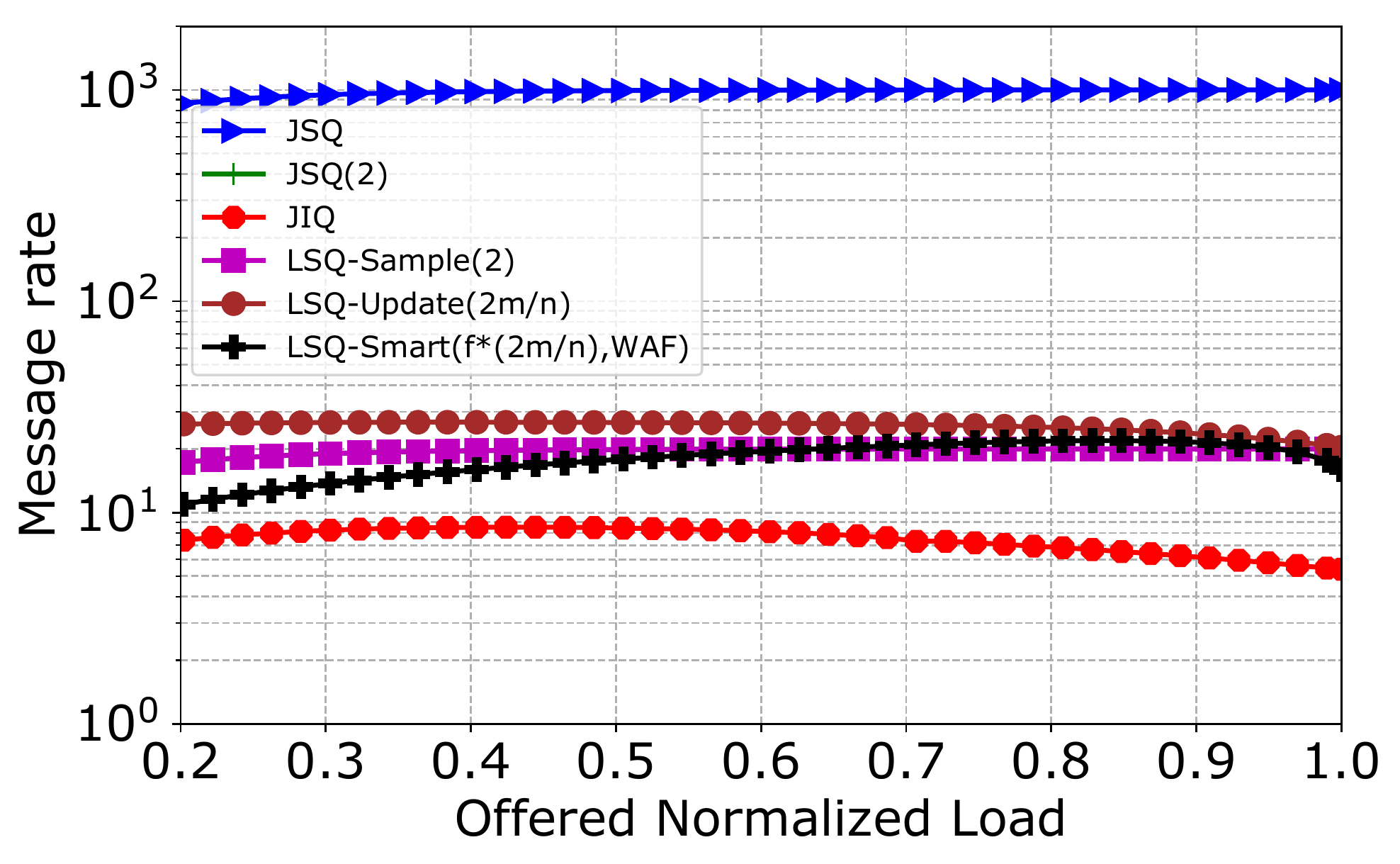}
  \caption{50 strong servers, 50 weak servers.}
  \label{fig:high_het:mess:05}
\end{subfigure}
\begin{subfigure}{0.30\linewidth}
  \includegraphics[width=\textwidth]{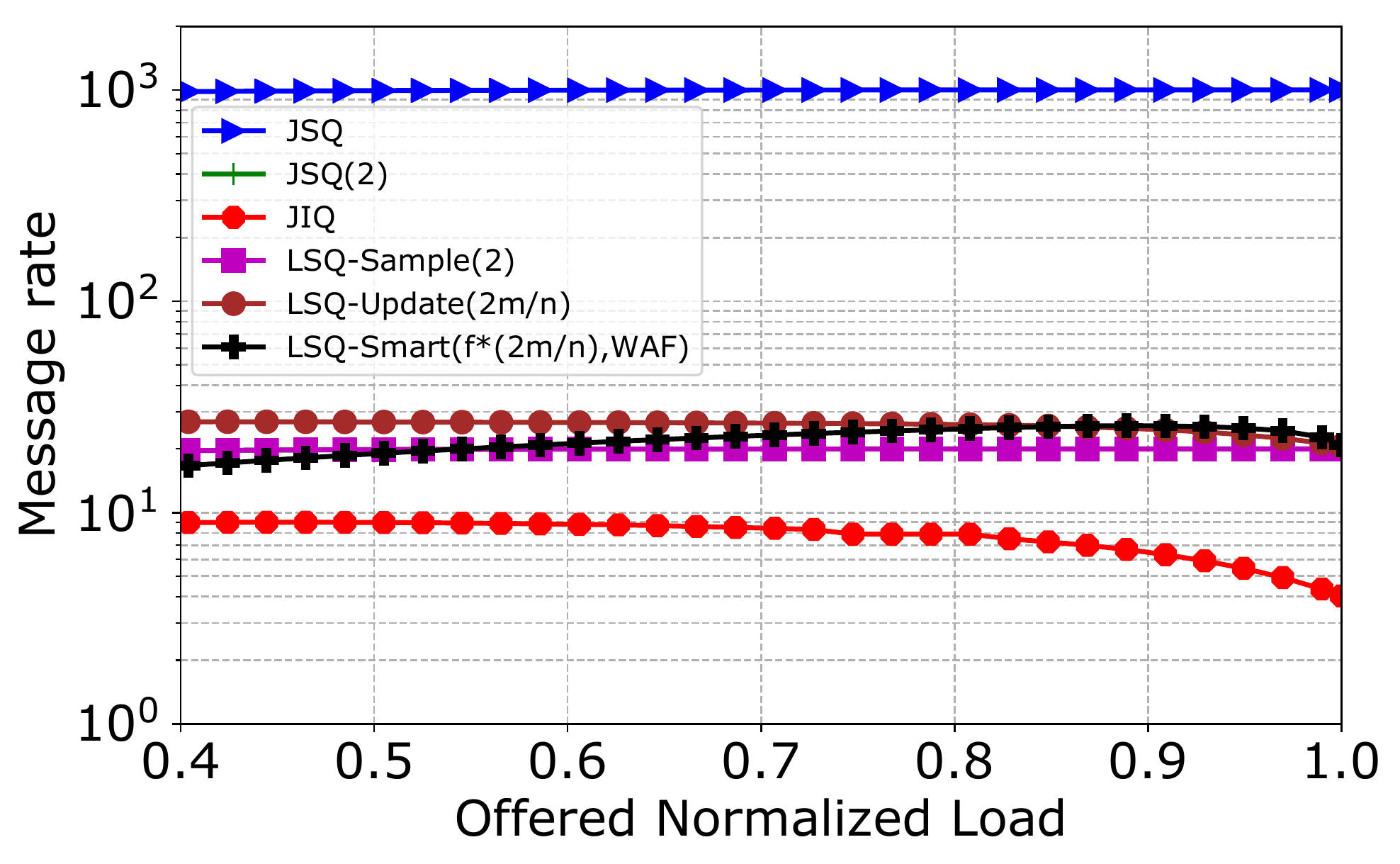}
  \caption{90 strong servers, 10 weak server.}
  \label{fig:high_het:mess:09}
\end{subfigure}
\caption{High heterogeneity scenario with 10 dispatchers and 100 heterogeneous servers.}
\label{fig:high_het}
\end{figure*}

We proceed to examine scenarios with a high degree of heterogeneity. Specifically, in these scenarios, the server service processes are geometrically distributed with a parameter $10p$ for a weak server and a parameter $p$ for a strong server. In a simulation with $n_s$ strong servers and $n_w$ weak servers, we set $p = \frac{n_s + 0.1n_w}{100}$ and sweep $0 \le \lambda < 100$. The results are presented in Figure \ref{fig:high_het}.

\T{Stability.} Again, in all three scenarios, $JIQ$ is not stable. Also, $JIQ$'s stability region is significantly decreased due to the higher levels of heterogeneity. $JSQ(2)$ is unstable as well, with even worse degradation in the stability region. Specifically, it is stable only when there are only 10\% weak servers in the mix such that the probability of not sampling a strong server upon arrival is sufficiently low. Again, as implied by mathematical analysis, $JSQ$ and all our three $LSQ$ schemes are stable.

\T{Performance.} Again, in all three scenarios and over all loads, our pull-based schemes exhibit the best performance. At high loads, our push-based scheme outperforms $JSQ$ in two out of the three scenarios, whereas it under-performs when there is a low number of strong servers in the mix. 

\T{Communication overhead.} Again, as expected, all our $LSQ$ schemes incur roughly the same communication overhead as the unstable $JSQ(2)$ policy. Recall that this is two orders of magnitude less than $JSQ$. 

\T{Delay tail distribution.} Another finding of our simulation results is that \emph{all} of the three $LSQ$ policies consistently provide a better delay tail distribution than $JSQ$. For example, in Figure~\ref{fig:high_het:jct:01:CCDF}, we present the CCDF of all stable policies at a normalized load of 0.95 in the scenario of Figure~\ref{fig:high_het:jct:01} (marked by a dashed grey line). It should be noted that $JSQ$ has a lower average job completion time than our $LSQ$-$Sample(2)$ policy. Nevertheless, $JSQ$ has a worse delay tail distribution. As illustrated in Figure~\ref{fig:high_het:jct:01:incast}, this is due to the incast effect, where the majority (or even all) of the dispatchers forward their incoming jobs to a single (least loaded) server. It is notable how in all three \name{} policies the incast is nearly eliminated by allowing different dispatchers to have a different view of the system.

\newer{Interestingly, Figure~\ref{fig:high_het:jct:01:CCDF} also shows that, for a small fraction that accounts for less than 0.00001 of the jobs (i.e., a rare event), the completion time under $LSQ{-}Smart(f^*(2m/n,WAF))$ is slightly larger than under $LSQ{-}Update(2m/n)$. Namely, even though $LSQ{-}Smart(f^*(2m/n,WAF))$ has a lower mean job completion time, it has a slightly slower decreasing delay tail. Essentially, the reason for this phenomenon is that even though for both policies incast is rare, in this specific scenario for $LSQ{-}Smart(f^*(2m/n,WAF))$, it is less rare than for $LSQ{-}Update(2m/n)$. For example, when using $LSQ{-}Smart(f^*(2m/n,WAF))$, for a small time fraction of $\approx$0.0001, five dispatchers send their jobs to the same server. But for $LSQ{-}Update(2m/n)$, it decreases by order of magnitude to $\approx$0.00001.}

\begin{figure}[t!]
\centering
\begin{subfigure}{0.85\linewidth}
\includegraphics[width=\textwidth]{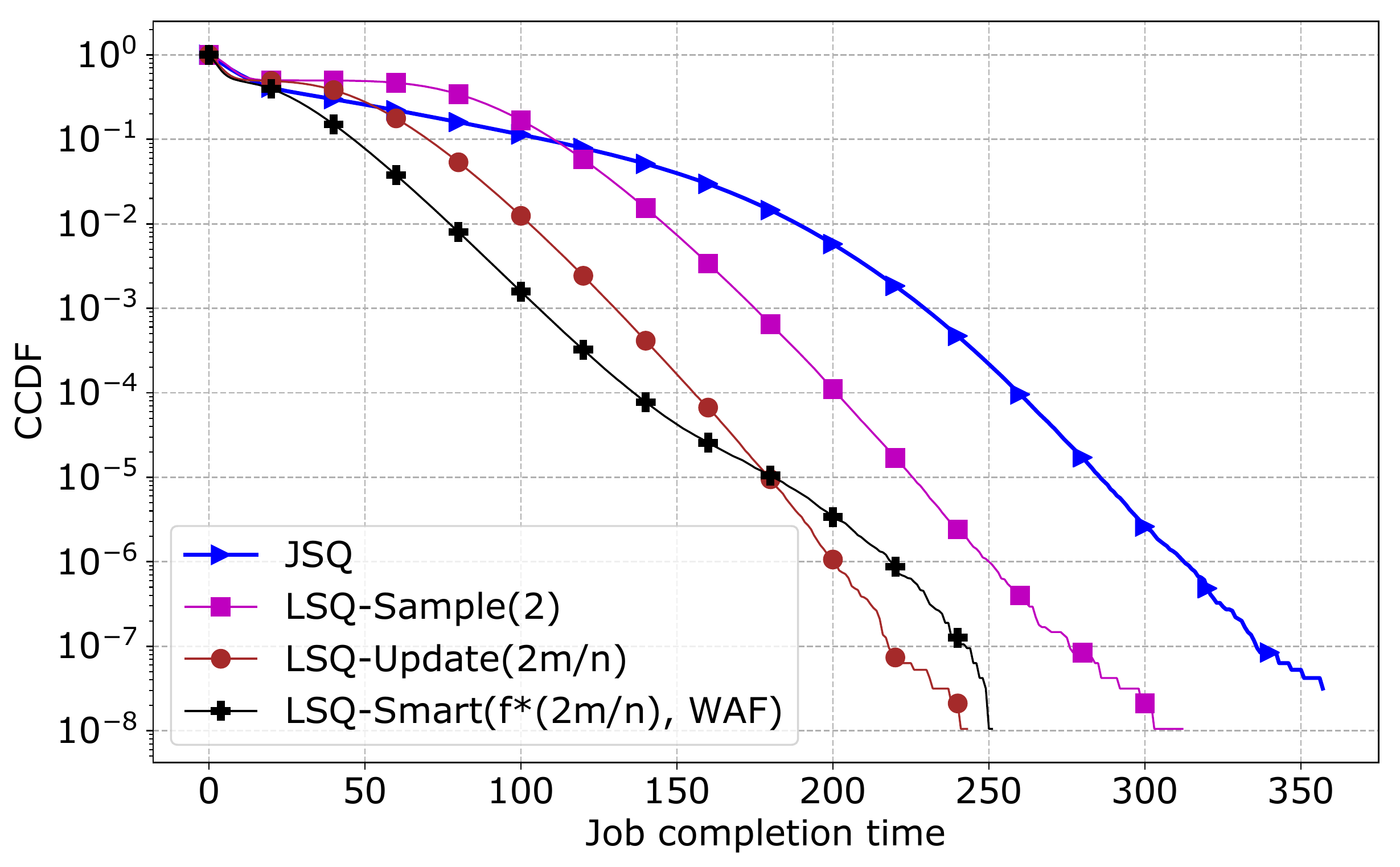}
  \caption{Job completion time CCDF.}
  \label{fig:high_het:jct:01:CCDF}
\end{subfigure}
\begin{subfigure}{0.85\linewidth}
  \includegraphics[width=\textwidth]{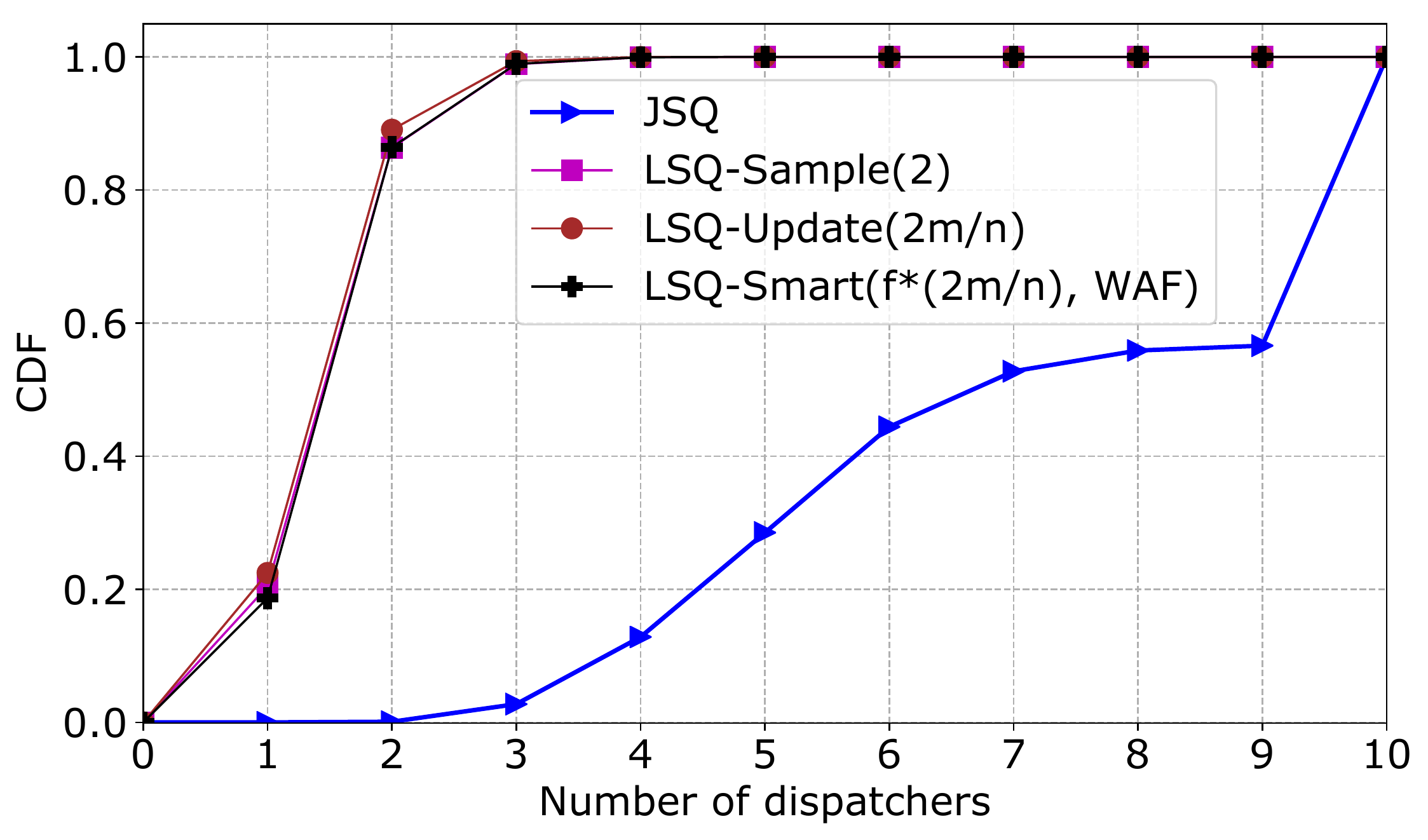}
  \caption{Maximum number of dispatchers that forward their arriving jobs to the same server at a time slot.}
  \label{fig:high_het:jct:01:incast}
\end{subfigure}

\caption{Illustration of the job completion time CCDF and the incast effect at a normalized load of 0.95 for the scenario in Figure~\ref{fig:high_het:jct:01} (marked by a dashed grey line).}
\vspace{-5mm}
\label{fig:high_het:jct:01:load_095}
\end{figure}


\subsection{Evaluation takeaways}

The three tested \name{} approaches always guarantee stability and do so using roughly the same communication budget as the non-throughput-optimal $JSQ(2)$. \new{Moreover}, the simulations indicate that, under these low-communication requirements, the tested \name{} policies consistently exhibit good performance in different scenarios and even admit better performance and delay tail distribution than the full-state information $JSQ$ that uses more communication overhead by orders of magnitude. 

Additionally, the evaluation results indicate how having pull-based communication and smart servers can further improve performance while using a similar communication budget. This is because pull-based communication and smart servers allow us to tune the system towards sending more messages from less loaded servers and directing them to less updated dispatchers, hence making better use of the communication budget. 

\T{Remark.} \newest{In this paper, we target heterogeneous servers. Nevertheless, we have simulation results that indicate that LSQ techniques provide better performance in terms of job completion times for the homogeneous case (where all the aforementioned techniques are stable) as well. Furthermore, they indicate that, in the homogeneous case, JSQ(2) offers better performance than JSQ at high loads and even delay tails that are competitive with LSQ due to the reduced incast. Due to space limits, these results have not been included in this paper. However, our evaluation code, which we intend to release upon the publication of the paper, allows to fully and easily recreate these results as well.}


\section{Arbitrarily low communication}

We have shown, both formally and by way of simulations, how different \name{} schemes offer strong theoretical guarantees and appealing performance with low communication overhead. In particular, by virtue of Theorems \ref{thm:simple_cond} and \ref{thm:simple_cond_2}, we can construct various strongly stable \name{} policies with any arbitrarily low communication budget, disregarding whether the system uses pull or push messages (or both). 

Achieving strong stability with arbitrary low communication is known to be possible with homogeneous servers, since even a uniform random load-balancing policy is stable in that case, and this was indeed strengthened in \cite{van2019hyper} for a scheme similar to \name{} and a different stability criterion \emph{in continuous time}. However, establishing this for heterogeneous servers and multiple dispatchers is far from straightforward, and constitutes one of the main contributions of this paper.

Let $M(t)$ be the number of queue length updates performed by all dispatchers up to time $t$. Fix any arbitrary small $r>0$. Suppose that we want to achieve strong stability, such that the average message rate is at most $r$, \ie for all $t$ we have that $\E [M(t)] \le rt$. Then, the two following per-time-slot dispatcher sampling rules trivially achieve strong stability (by Theorem \ref{thm:simple_cond}) and respect the desired bound, \ie $\E [M(t)] \le rt$. 

\begin{example}[push-based communication example] Dispatcher sampling rule upon job(s) arrival:
\begin{enumerate}
    {\setlength\itemindent{15pt} \item[(1)] pick a server $i \in N$ uniformly at random.}
    {\setlength\itemindent{15pt} \item[(2)] sample server $i$ with probability $\frac{r}{m}$.}
\end{enumerate}
\end{example}

\begin{example}[pull-based communication example]
Server messaging rule upon job(s) completion:
\begin{enumerate}
{\setlength\itemindent{15pt} \item[(1)] pick a dispatcher $j \in M$ uniformly at random.}
{\setlength\itemindent{15pt} \item[(2)] update dispatcher $j$ with probability $\frac{r}{n}$.}
\end{enumerate}

\end{example}
These theorems also enable us to design stable \name{} policies with hybrid communication (\eg push and pull) that attempt to maximize the benefits of both approaches. For example, the following policy leverages both the advantages of pull-based communication (\ie being immediately notified that a server becomes idle) and push-based communication (\ie random exploration of shallow queues when no servers are idle).

\begin{example}[Hybrid communication example]
Dispatcher sampling rule: 
\begin{enumerate}
{\setlength\itemindent{15pt} \item[(1)] pick a server $i \in N$ uniformly at random.} 
{\setlength\itemindent{15pt} \item[(2)] sample server $i$ with probability $\frac{r}{m}$.} 
\end{enumerate}
Server messaging rule: 
\begin{enumerate}
{\setlength\itemindent{15pt} \item[(1)] if got idle, pick a dispatcher $j \in M$ uniformly at random and send it an update message.} 
\end{enumerate}

\end{example}

The above examples demonstrate the wide range of possibilities that the \name{} approach offers to the design of stable, scalable policies with arbitrarily low communication overhead. 


\section{Discussion}

Before concluding the study, we discuss several additional properties of the different load balancing approaches.

\T{Instantaneous routing.} An appealing property of any \name{} policy, similarly to $JIQ$, is that a dispatcher can immediately take routing decisions upon a job arrival. This is in contrast to common push-based policies that have to wait for a response from the sampled servers to be able to make a decision. For example, when using the $JSQ(2)$ policy, when a job arrives the dispatcher cannot immediately send the job to a server but must pay the additional delay of sampling two servers.

\T{Space requirements.} To implement an \name{} policy, similarly to $JSQ$, each dispatcher has to hold an array of size $n$ (local views). When smart servers are used, each server also has to hold an array of size $m$ (dispatcher states). Such a space requirement incurs negligible overhead on a modern server. For example, nowadays, any commodity server has tens to hundreds of GBs of DRAM. But even a hypothetical cluster with $10^6$ servers requires only a few MB of the dispatcher's memory and much less (at least by 1-2 orders of magnitude) of server's memory, which is negligible in comparison to the DRAM size.

\T{Computational complexity.} To implement an \name{} policy, similarly to $JSQ$, each dispatcher has to repeatedly find the minimum (local) queue length. By using a priority queue (\eg min-heap), finding the minimum results in only a single operation (i.e., simply looking at the head of the priority queue). For a queue length update operation, $O(\log n)$ operations are required in the worst case (\eg decrease-key operation in a min-heap).\footnote{A more sophisticated data structure, such as the Fibonacci heap, may offer even O(1) operations per update yet incur a higher constant.} Even with $n=10^6$, just a few operations are required in the worst case per queue length update. This results in a single commodity core being able to perform tens to hundreds of millions of such updates per second, hence resulting in negligible overhead, especially for a low-communication policy in which queue length updates are not too frequent.

\T{\new{Inaccurate information can lead to better performance.}} \new{Although all tested \name{} variants in this paper are proved to be throughput-optimal, it is still surprising that using inaccurate information can lead to better performance than $JSQ$. In fact, we have found that allowing outdated information in the multi-dispatcher scenario not only reduces communication overhead significantly but also often results in better performance when compared to the full-state $JSQ$. This is because the incast effect can significantly degrade the performance of $JSQ$  when many dispatchers forward their jobs to the shortest queue(s) simultaneously. On the other hand, as the simulation results indicate, having inaccurate information reduces the incast effect, since each dispatcher may believe that a different queue is the shortest (\eg Fig. \ref{fig:high_het:jct:01:CCDF} and \ref{fig:high_het:jct:01:incast}). Intuitively, in \name{}, by allowing inaccurate local states of the queue lengths, jobs are being forwarded to queues that may not be the least loaded ones but still have low load, which therefore reduces incast.}

\T{Alleviating incast with noise.} Another natural way to reduce incast may be to use the $JSQ$ policy with some sophisticated i.i.d. noise addition scheme that locally modifies the approximated queue lengths at each dispatcher in order to break synchronization while preserving performance. In fact, we can reuse our proof of Theorem \ref{thm:1} to show that such a scheme is ensured to be strongly stable if the added noise is bounded in expectation. \new{In a sense, this is related to the issue discussed above, namely, that inaccurate information can lead to better performance.}
Nonetheless, such a \new{$JSQ$-based} solution is still not scalable in terms of communication overhead.


\section{Conclusion}

In this paper, we introduced the \name{} family of load balancing algorithms. We formally established an easy-to-satisfy sufficient condition for an \name{} policy to be strongly stable. We further developed easy-to-verify sufficient stability conditions and exemplified their use. Then, using simulations, we showed how different \name{} schemes significantly outperform well-known low-communication policies, such namely $JSQ(d)$ and $JIQ$, while consuming a similar communication budget. We further demonstrated how relying on pull-based communication and, even further, on smart servers, allows \name{} to outperform even $JSQ$ in terms of both the means and tail distributions of the job completion times, while using orders of magnitude less communication. 

\section{\new{Future work}}

\T{\new{Performance guarantees.}} \new{In this work, we obtained the throughput optimality of different scalable \name{} policies for heterogeneous systems with multiple dispatchers. We believe that the theoretical investigation of specific performance guarantees (\eg delay bounds) may lead to even better \name{} techniques.
}  

\T{\new{Adversarial evaluations.}} \new{In this work, we employed simulations in order to test different \name{} policies by way of simulations with respect to our considered system model, and in comparison to other proposed policies. It will be of interest to test the \name{} concept on real systems as well as to explore how \name{} performs under different adversarial settings in comparison to current practice.} 

\T{\newest{Known service rates.}} \newest{In this work, we assume that the servers service rates are unknown to the dispatchers. However, in systems where full or partial knowledge regarding these rates can be obtained, it is of interest to investigate how such knowledge can be used in order to provide improved load balancing solutions to the many-dispatcher case.}

\section*{Acknowledgments}

The authors would like to thank associate editor Paolo Giaccone and the anonymous reviewers for their helpful comments. This work was partly supported by the Hasso Plattner Institute Research School, the Israel Ministry of Science and Technology, the Technion Hiroshi Fujiwara Cyber Security Research Center, the Israel Cyber Bureau, and the Israel Science Foundation (grant No. 1119/19).


\bibliographystyle{abbrv}
\bibliography{refs}


\appendix

\section{Proofs of lemmas}

This Appendix provides the proofs of the various lemmas that we employed towards establishing our theoretical results. 


\subsection{Proof of Lemma \ref{lem:JSQ_wr}}\label{app:lem:JSQ_wr}

First, by definition, $$\sum_{i=1}^n a_i^{JSQ}(t) = \sum_{i=1}^n a_i^{WR}(t)=a(t).$$
Therefore, both $\set{a_i^{JSQ}(t)}_{i=1}^n$ and $\set{a_i^{WR}(t)}_{i=1}^n$ are feasible solutions to the optimization problem given by
\begin{equation}
\begin{aligned}
& \underset{x}{\text{minimize}}
& & \sum_{i=1}^n x_i(t) Q_i(t) \\
& \text{subject to}
& & \sum_{i=1}^n x_i(t) = a(t), \quad x_i(t) \ge 0 \: \forall i \in N
\end{aligned}
\end{equation}
The optimal solution \new{value} to this problem is simply
$$a(t)\min_i\set{Q_i(t)},$$
which is exactly the way $JSQ$ policy operates. That is, $$\sum_{i=1}^n a_i^{JSQ}(t) Q_i(t) = a(t)\min_i\set{Q_i(t)}.$$ Clearly, any other feasible solution, \eg $\set{a_i^{WR}(t)}_{i=1}^n$, cannot be better. This concludes the proof. \qed


\subsection{Proof of Lemma \ref{lem:diff_w_JSQ}}\label{app:lem:diff_w_JSQ}

Expanding the term $\sum_{i=1}^n Q_i(t)\bp{a_i(t)-a_i^{JSQ}(t)}$ yields
\begin{equation} \small \sum_{i=1}^n Q_i(t)\bp{a_i(t)-a_i^{JSQ}(t)} = 
  \sum_{i=1}^n \sum_{j=1}^m Q_i(t)\bp{a_i^j(t)-a_i^{j,JSQ}(t)}. 
\end{equation}
We now substitute $Q_i(t)$ by $Q_i(t)-\tilde{Q}_i^j(t)+\tilde{Q}_i^j(t)$. This yields
\begin{equation}\label{eq:subtitude_w}
\begin{split}
  &\sum_{i=1}^n Q_i(t)\bp{a_i(t)-a_i^{JSQ}(t)} = \cr
  & \sum_{i=1}^n \sum_{j=1}^m \bp{Q_i(t)-\tilde{Q}_i^j(t)+\tilde{Q}_i^j(t)}\bp{a_i^j(t)-a_i^{j,JSQ}(t)}. 
\end{split}
\end{equation}
We proceed with the following lemma:
\begin{lemma}\label{lem:our_JSQ}
For all $t$ it holds that 
$$\sum_{i=1}^n \sum_{j=1}^m \tilde{Q}_i^j(t)\bp{a_i^j(t)-a_i^{j,JSQ}(t)} \le 0.$$
\end{lemma}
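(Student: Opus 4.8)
The plan is to prove this inequality separately for each dispatcher $j$; that is, I would show that for every fixed $j \in M$ and every time slot $t$,
\begin{equation*}
\sum_{i=1}^n \tilde{Q}_i^j(t)\bp{a_i^j(t)-a_i^{j,JSQ}(t)} \le 0,
\end{equation*}
and then sum over $j$. This reduction is legitimate because the two routing vectors $\{a_i^j(t)\}_i$ and $\{a_i^{j,JSQ}(t)\}_i$ both distribute the \emph{same} number of jobs $a^j(t)$ among the servers — the first according to the \name{} local view, the second according to $JSQ$ on the true queues — so the comparison decouples across dispatchers.

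For a fixed dispatcher $j$, recall that under \name, dispatcher $j$ sends all $a^j(t)$ of its jobs to a server $i^{**} \in \mathrm{argmin}_i\{\tilde{Q}_i^j(t)\}$, so $\sum_i \tilde{Q}_i^j(t) a_i^j(t) = a^j(t)\cdot \min_i\{\tilde{Q}_i^j(t)\}$. On the other hand, $JSQ$ sends those $a^j(t)$ jobs to some server $i^*\in\mathrm{argmin}_i\{Q_i(t)\}$, so $\sum_i \tilde{Q}_i^j(t) a_i^{j,JSQ}(t) = a^j(t)\cdot \tilde{Q}_{i^*}^j(t)$. Hence the left-hand side equals $a^j(t)\big(\min_i\{\tilde{Q}_i^j(t)\} - \tilde{Q}_{i^*}^j(t)\big)$, which is at most $0$ since $a^j(t)\ge 0$ and $\tilde{Q}_{i^*}^j(t) \ge \min_i\{\tilde{Q}_i^j(t)\}$ trivially. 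Summing this nonpositive quantity over all $j \in M$ gives the claim. In other words, the key observation is simply that \name{} routes to the minimizer of its \emph{own} view, so it does at least as well as $JSQ$ would when the cost is measured against that same local view.

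The only subtlety — and the one place I would be careful — is the handling of ties and of the randomized tie-breaking: when $\mathrm{argmin}$ is not a singleton, $a_i^j(t)$ and $a_i^{j,JSQ}(t)$ are random, but the identity $\sum_i \tilde{Q}_i^j(t) a_i^j(t) = a^j(t)\min_i\{\tilde{Q}_i^j(t)\}$ holds \emph{surely}, not merely in expectation, because the tie is broken among servers all attaining the same view value; similarly $\tilde{Q}_{i^*}^j(t)\ge\min_i\{\tilde{Q}_i^j(t)\}$ holds for every realization of $i^*$. So no expectation is needed here and the bound is pathwise. This is really the main (and essentially only) obstacle, and it is minor. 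I do not expect any heavy calculation.
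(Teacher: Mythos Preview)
Your proposal is correct and follows essentially the same approach as the paper: decouple across dispatchers, observe that \name{} attains the minimum of $\sum_i x_i\,\tilde{Q}_i^{j}(t)$ over all allocations of $a^{j}(t)$ jobs, and conclude that the $JSQ$ allocation cannot do better against the local view. The paper phrases this via an explicit linear minimization problem (mirroring the proof of Lemma~\ref{lem:JSQ_wr}), whereas you compute the two sums directly and also spell out why random tie-breaking poses no issue; the substance is the same.
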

\begin{proof}
See Appendix \ref{app:lem:our_JSQ}. 
\end{proof}

Using Lemma \ref{lem:our_JSQ} in \eqref{eq:subtitude_w} yields
\begin{equation}\label{eq:subtitude_w_2}
\begin{split}
  &\sum_{i=1}^n Q_i(t)\bp{a_i(t)-a_i^{JSQ}(t)} \le \cr
  & \sum_{i=1}^n \sum_{j=1}^m \bp{Q_i(t)-\tilde{Q}_i^j(t)}\bp{a_i^j(t)-a_i^{j,JSQ}(t)}. 
\end{split}
\end{equation}
Now, using the fact that $xy \le |x||y|$ for all $(x,y) \in \mathbb{R}^2$ on \eqref{eq:subtitude_w_2} yields
\begin{equation}\label{eq:subtitude_w_3}
\begin{split}
  &\sum_{i=1}^n Q_i(t)\bp{a_i(t)-a_i^{JSQ}(t)} \le \cr
  & \sum_{i=1}^n \sum_{j=1}^m \Big|Q_i(t)-\tilde{Q}_i^j(t)\Big| \Big|a_i^j(t)-a_i^{j,JSQ}(t)\Big|.
\end{split}
\end{equation}
Finally, it trivially holds that
\begin{equation}\label{eq:subtitude_w_4}
a(t) \ge \Big|a_i^j(t)-a_i^{j,JSQ}(t)\Big|.
\end{equation}
Using \eqref{eq:subtitude_w_4} in \eqref{eq:subtitude_w_3} concludes the proof. \qed


\subsection{Proof of Lemma \ref{lem:wr_stable}}\label{app:lem:wr_stable}

Each dispatcher applies the WR policy independently. Therefore, by applying \eqref{eq:service_1}, \eqref{eq:service_2}, \eqref{eq:arrival_1} and \eqref{eq:arrival_2} we have that the expected number of jobs arriving at each server $i$ is  
\begin{equation}\label{eq:wr_ex}
\begin{split}
    &\E \Big[a_i^{WR}(t)\Big] = \E \Big[\sum_{j=1}^m a_i^{j,WR}(t)\Big] = \cr &\frac{\mu_i^{(1)}}{\sum_{i=1}^n \mu_i^{(1)}} \E \Big[\sum_{j=1}^m a^j(t) \Big] = \frac{\lambda^{(1)}\mu_i^{(1)}}{\sum_{i=1}^n \mu_i^{(1)}}.
\end{split}
\end{equation}
Using \eqref{eq:service_1}, \eqref{eq:service_2}, \eqref{eq:subcritical_assumption} and \eqref{eq:wr_ex} yields
\begin{equation}
\begin{split}
    &\E \Big[s_i(t)-a_i^{WR}(t)\Big] = \mu_i^{(1)} - \frac{\lambda^{(1)}\mu_i^{(1)}}{\sum_{i=1}^n \mu_i^{(1)}} = \cr
    &\mu_i^{(1)} \frac{\sum_{i=1}^n \mu_i^{(1)}-\lambda^{(1)}}{\sum_{i=1}^n \mu_i^{(1)}} =  \frac{\epsilon\mu_i^{(1)}}{\sum_{i=1}^n \mu_i^{(1)}}.
\end{split}
\end{equation}
This concludes the proof. \qed


\subsection{Proof of Lemma \ref{lem:our_JSQ}}\label{app:lem:our_JSQ}

Fix $j=j^*$. It is sufficient to show that 
\begin{equation}\label{eq:suff_cond_ours_JSQ}
    \sum_{i=1}^n \tilde{Q}_i^{j^*}(t)\bp{a_i^{j^*}(t)-a_i^{{j^*},JSQ}(t)} \le 0.
\end{equation}
The proof now follows similar lines to the proof of Lemma \ref{lem:JSQ_wr}.
By definition $$\sum_{i=1}^n a_i^{j^*}(t) = \sum_{i=1}^n a_i^{{j^*},JSQ}(t)=a^{j^*}(t).$$
Therefore, both $\set{a_i^{j^*}(t)}_{i=1}^n$ and $\set{a_i^{{j^*},JSQ}(t)}_{i=1}^n$ are feasible solutions to the optimization problem given by
\begin{equation}
\begin{aligned}
& \underset{x}{\text{minimize}}
& & \sum_{i=1}^n x_i(t) \tilde{Q}_i^{j^*}(t) \\
& \text{subject to}
& & \sum_{i=1}^n x_i(t) = a^{j^*}(t), \quad x_i(t) \ge 0 \: \forall i \in N
\end{aligned}
\end{equation}
The optimal solution \new{value} to this problem is simply
$a^{j^*}(t)\min_i\set{\tilde{Q}_i^{j^*}(t)},$
which is exactly the way our policy operates since it performs $JSQ$ considering $\set{\tilde{Q}_i^{j^*}(t)}$ instead of $\set{Q_i(t)}$. That is $$\sum_{i=1}^n a_i^{j^*}(t) Q_i(t) = a^{j^*}(t)\min_i\set{\tilde{Q}_i^{j^*}(t)}.$$ Any other feasible solution including $\set{a_i^{JSQ}(t)}_{i=1}^n$ cannot be better when considering $\set{\tilde{Q}_i^{j^*}(t)}$ instead of $\set{Q_i(t)}$. This proves the inequality in \eqref{eq:suff_cond_ours_JSQ} and thus concludes the proof.
\qed


\subsection{Proof of Lemma \ref{lem:gap_reccurence}}\label{app:lem:gap_reccurence}

We prove the lemma by the way of induction on $n$.

\TT{Basis.} For $t=0$ the claim trivially holds since
$$T(0) \le C_2 \le \max \set{\frac{C_1}{\epsilon}, C_2}.$$

\TT{Induction hypothesis.} Assume that 
$$T(n) \le \max \set{\frac{C_1}{\epsilon}, C_2}.$$

\TT{Inductive step.} By definition
$$T(n+1) \le (1-\epsilon) \cdot T(n) + C_1.$$
Now, using the induction hypothesis yields
\begin{equation*}
    \begin{split}
        &T(n+1) \le (1-\epsilon)\max \set{\frac{C_1}{\epsilon}, C_2} + C_1 = \cr 
        & \max \set{\frac{C_1}{\epsilon}, C_2} - \max \set{C_1, \epsilon \cdot C_2} + C_1 \le \cr 
        & \max \set{\frac{C_1}{\epsilon}, C_2}.
    \end{split}
\end{equation*}
This concludes the proof. 
\qed


\end{document}